\newcommand{\BR}{\mathbb{R}}
\newcommand{\BC}{\mathbb{C}}
\newcommand{\Sp}{\mathrm{Sp}}
\newcommand{\cC}{\mathcal{C}}
\newcommand{\cS}{\mathcal{S}}
\newcommand{\cK}{\mathcal{K}}
\newcommand{\cD}{\mathscr{D}}
\newcommand{\cF}{\mathscr{F}}
\def\tr{\mathop{\rm Tr}}
\newcommand{\be}{\begin{equation}}
\newcommand{\ee}{\end{equation}}
\newtheorem{proposition}{Proposition}
\newtheorem{theorem}{Theorem}
\newtheorem{lemma}{Lemma}
\newtheorem{rem}{Remark}
\begin{document}
\title{Random walk on quantum blobs}
\author{Arkadiusz Jadczyk\smallskip\\Laboratoire de Physique Th\'{e}orique, \\
Universit\'{e} de Toulouse III \& CNRS,\\ 118 route de Narbonne, 31062 Toulouse, France\\
and\\
Ronin Institute, Montclair, NJ 0704}
\maketitle
\begin{abstract}
We describe the action of the symplectic group on the homogeneous space of squeezed states (quantum blobs) and extend this action to the semigroup. We then extend the metaplectic representation to the metaplectic (or oscillator) semigroup and study the properties of such an extension using Bargmann-Fock space. The shape geometry of squeezing is analyzed and noncommuting elements from the symplectic semigroup are proposed to be used in simultaneous monitoring of noncommuting quantum variables - which should lead to fractal patterns on the manifold of squeezed states.
\end{abstract}
\section{Introduction}
While the manifold of all quantum states of a mechanical system is infinite dimensional, it containes a finite dimensional manifold of special states - {\em quantum blobs}, as Maurice de Gosson termed them in \cite{gosson1,gosson2}. They are ``minimal uncertainty states'' - up to phase space rotation. Other authors, having mainly the geometrical properties in mind,  research closely related {\em Siegel-Jacobi manifolds} \cite{berceanu05,berceanu12,berceanu14}. These are manifolds of states that can be obtained from the ground state of the standard multidimensional harmonic oscillator by applying squeezing and phase space translations.\\
The manifold of squeezed states is transformed into itself under quantum mechanical unitary evolutions stemming from quadratic hamiltonians. That is one of the important reasons why it is of interest. But, what is even more important for us: it is also preserved under quantum measurements of gaussian type phase space variables. Recently physicists are paying more and more attention to quantum phenomena during a continuous monitoring of several non-commuting variables - cf. e.g \cite{ruskov10,cornel}. Quantum jumps caused by such monitoring lead to fractal patterns on the Bloch sphere - {\em quantum fractals}. The generation mechanism and properties have been described in the monograph \cite{jadczyk14}. In the present work we move from simple qubits to infinite dimensional Hilbert spaces, but restricting our attention to the, still manageable,  finite dimensional manifold of squeezed states.
Here we will not be concerned with phase space translations, we will concentrate our discussion on squeezed states centered at the origin of the phase space. In Sec. \ref{sec:sgs} we define the symplectic group, first in its real realization, then the isomorphic complex version that is so useful for understanding the geometric properties of squeezing (cf. Sec. \ref{sec:sss}) and for the metaplectic representation (cf. Sec. \ref{sec:mr}) in the Bargmann-Fock space (cf. Sec. \ref{sec:bfs}). Unitary operators from the metaplectic representation can be used in quantum state monitoring, they lead to quantum fractal patterns, but only at the boundary of the unit disk that parametrizes squeezed states for one degree of freedom (cf. Sec. \ref{sec:fp}). Using unitary operators has the additional disadvantage that probabilities of different quantum jumps are state independent, therefore observation of the detectors give no information about the quantum state. For quantum state monitoring it is therefore better to use the operators representing the symplectic semigroup. These can be formally obtained as analytic continuation of unitary operators for quadratic Hamiltonians. For example the operator $e^{itP^2}$ is unitary for real $t,$ but is a bounded positive operator for $t=i\kappa,\, \kappa>0.$ The operator $e^{-\kappa P^2}$ can be considered as approximating delta function at the momentum $p=0,$ it can be used for the (fuzzy) monitoring of the momentum of the particle as it detects when the particle has the momentum close to zero.\\
At the matrix level these operations on squeezed states are described by a semigroup. Semigroup extensions of the symplectic group have been discussed in the past by various authors a
nd in Sec. \ref{sec:sgs} we provide a short review. For us it is important to know how semigroup elements operate on the manifold of squeezed states. At the end of Sec. \ref{sec:hd}) we propose to use the natural linear fractional transformation. This is justified later when we  derive the formula for operation of the operators from the oscillator semigroup on Hilbert space squeezed states in Sec. \ref{sec:ass}. We are merely touching the tip of the iceberg there and this part of the theory poses open questions and is in need of further development.

In Sec. \ref{sec:sss} we introduce the Wigner distribution for the study of the shape of squeezed states in the phase space. The parametrization of squeezed states by complex symmetric matrices of norm less than 1 is the most convenient here. We use the symmetric singular value decomposition theorem there and justify this way the statement that squeezing can be decomposed into rotation and shrinking-expanding.

At the end of this paper we provide somewhat disorganized preliminary results of simulation of quantum fractals resulting from random walk on squeezed states parametrized by points of the interior of the Poincar\'{e} disk. Random application of several noncommuting elements of the symplectic group lead to a chaotic pattern with Cantor set-like angular distribution. The points tend quickly to the boundary - the unit circle.

In order to get fractal patterns inside the disk - semigroup rather than group elements are needed. These elements map the disk {\em into} but not {\em onto}, even if they are injections. An example of the pattern (a movie with a continuously varying squeezing parameter) obtained by using four noncommuting contractions from the oscillator semigroup can be perused online \cite{yta}.
%These examples will be described in detail in a separate, forthcoming paper.
\section{The symplectic group and semigroup\label{sec:sgs}}
In quantum theory of a mechanical system with $n$ degrees of freedom the canonical quantization is accomplished by selecting selfadjoint position and momentum operators $Q_\alpha=Q_\alpha^*$ and $P_\alpha=P_\alpha^*$ satisfying the canonical commutation relations (we are assuming the system of units in which (the numerical value of) the Planck constant $\hbar=1$):
\be [Q_\alpha,Q_\beta]=[P_\alpha,P_\beta]=0,\quad [Q_\alpha,P_\beta]=i\delta_{\alpha\beta},\,(\alpha,\beta=1,...,n).\label{eq:ccr}\ee
It is convenient to introduce a vector $Z$ of $2n$ selfadjoint components $Z_k=Z_k^*,\,(k=1,...,2n)$:
\be Z=\begin{bmatrix}Q_1\\...\\Q_n\\P_1\\...\\P_n\end{bmatrix}.\ee
The canonical commutation relations (\ref{eq:ccr}) can then be written as
\be [Z_k,Z_l]=iJ_{kl},\,(k,l=1,...,2n),\ee
where \be
J=\begin{bmatrix}0&I_n\\-I_n&0\end{bmatrix}\ee
and $I_n$ is the $n\times n$ identity matrix. In the future we will simply write $I$ instead of $I_n$.\\

We can make a new vector ${\tilde Z}_k$ with selfadjoint components by taking real linear combinations of $Z_k:$
\be \tilde{Z}_k=\sum_{l=1}^{2n} S_{kl}Z_l.\ee
Then, as it is easy to see, $\tilde{Z}$ satisfy also the canonical commutation relations if and only if the matrix $S$ is symplectic, i.e.
\be SJS^T=J,\ee
where $S^T$ denotes the matrix transposed to $S.$
\subsection{The symplectic group}
We denote by $\mathrm{M}_{n}(\BR)$ (resp. $\mathrm{M}_{n}(\BC)$) the algebra of all real (resp. complex) $n\times n$ matrices, and by $\Sp(2n,\BR)$ (resp. $\Sp(2n,\BC)$) the group of real (resp. complex) symplectic matrices
\be \Sp(2n,\BR)=\{S\in \mathrm{M}_{2n}(\BR):S^TJS=SJS^T=J\},\label{eq:sp2nr}\ee
\be \Sp(2n,\BC)=\{S\in \mathrm{M}_{2n}(\BC):S^TJS=SJS^T=J\}.\label{eq:sp2nc}\ee

We are mainly interested in the group of real symplectic matrices, but complex symplectic  matrices will also appear. For a real or complex block matrix $S=\left[\begin{smallmatrix}\lambda&\mu\\\nu&\rho\end{smallmatrix}\right]$ the conditions (\ref{eq:sp2nr}) or (\ref{eq:sp2nc}) read
\be \lambda\rho^T-\mu\nu^T=I\quad(\mbox{or}\quad \rho\lambda^T-\nu\mu^T=I),\label{eq:spc1a}\ee
\be \lambda\mu^T=\mu\lambda^T\quad\mbox{and}\quad \nu\rho^T=\rho\nu^T,\ee
or, equivalently
\be \lambda^T\rho-\nu^T\mu=I\quad(\mbox{or}\quad \rho^T\lambda-\mu^T\nu=I),\label{eq:spc2a}\ee
\be \lambda^T\nu=\nu^T\lambda\quad\mbox{and}\quad \rho^T\mu=\mu^T\rho.\label{eq:spc2b}\ee
Later on, when examining the action of the oscillator semigroup on squeezed states, we will need the following lemma
\begin{lemma}
Let $S=\left[\begin{smallmatrix}\lambda&\mu\\\nu&\rho\end{smallmatrix}\right]\in \Sp(2n,\BC)$ and let $A$ be $n\times n$ complex symmetric matrix. If $\lambda+\mu A$ is invertible, then $A'=(\nu+\rho A)(\lambda+\mu A)^{-1}$ is also symmetric.
\end{lemma}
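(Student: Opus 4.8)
The plan is to reduce the symmetry of $A'$ to a single matrix identity and then verify that identity by direct expansion using the symplectic relations (\ref{eq:spc2a})--(\ref{eq:spc2b}). Write $U=\lambda+\mu A$ and $V=\nu+\rho A$, so that $A'=VU^{-1}$ with $U$ invertible by hypothesis. Since $(VU^{-1})^T=U^{-T}V^T$, the matrix $A'$ is symmetric iff $U^{-T}V^T=VU^{-1}$, and multiplying this on the left by $U^T$ and on the right by $U$ shows it is equivalent to
\be V^TU=U^TV.\ee
So it suffices to prove $(\nu+\rho A)^T(\lambda+\mu A)=(\lambda+\mu A)^T(\nu+\rho A)$.

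Next I would expand both sides, using the symmetry $A^T=A$:
\be (\nu^T+A\rho^T)(\lambda+\mu A)=\nu^T\lambda+\nu^T\mu A+A\rho^T\lambda+A\rho^T\mu A,\ee
\be (\lambda^T+A\mu^T)(\nu+\rho A)=\lambda^T\nu+\lambda^T\rho A+A\mu^T\nu+A\mu^T\rho A.\ee
Subtracting, the difference $V^TU-U^TV$ becomes
\be (\nu^T\lambda-\lambda^T\nu)+(\nu^T\mu-\lambda^T\rho)A+A(\rho^T\lambda-\mu^T\nu)+A(\rho^T\mu-\mu^T\rho)A.\ee
By (\ref{eq:spc2b}) the first and last parenthesised blocks vanish, since $\lambda^T\nu=\nu^T\lambda$ and $\rho^T\mu=\mu^T\rho$. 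For the middle two, (\ref{eq:spc2a}) in its two equivalent forms gives $\lambda^T\rho-\nu^T\mu=I$ and $\rho^T\lambda-\mu^T\nu=I$, so the second block contributes $-A$ and the third contributes $+A$. These cancel, hence $V^TU-U^TV=0$, and the symmetry of $A'$ follows.

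The computation is essentially routine; the only point requiring a little care is the bookkeeping of transposes together with the use of \emph{both} equivalent forms of the symplectic conditions — one form annihilates the term linear in $A$ on the right, the other the one on the left, and it is precisely the leftover $-A$ and $+A$ that must cancel. I do not anticipate any genuine obstacle: the lemma is just the statement that the fractional linear action $A\mapsto(\nu+\rho A)(\lambda+\mu A)^{-1}$ of $\Sp(2n,\BC)$ preserves the space of complex symmetric $n\times n$ matrices, which is the natural ``Siegel half-space'' behaviour underlying the action on squeezed states in Sec.~\ref{sec:ass}.
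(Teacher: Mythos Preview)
Your proof is correct and follows essentially the same route as the paper: both reduce the symmetry of $A'$ to the single identity $(A\rho^T+\nu^T)(\lambda+\mu A)=(A\mu^T+\lambda^T)(\nu+\rho A)$ (which in your notation is $V^TU=U^TV$) and then verify it from the symplectic relations (\ref{eq:spc2a})--(\ref{eq:spc2b}). The only difference is cosmetic: the paper simply asserts that the identity ``is automatically satisfied'' by those relations, whereas you expand and exhibit the four-term cancellation explicitly.
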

\begin{proof}Assuming $A=A^T$ and $\lambda+\mu A$ invertible, we have
\be A'^T=(A\mu^T+\lambda^T)^{-1}(A\rho^T+\nu^T).\ee The condition $A'=A'^T$ reads then as
\be (A\mu^T+\lambda^T)^{-1}(A\rho^T+\nu^T)=(\nu+\rho A)(\lambda+\mu A)^{-1},\ee
or, equivalently:
\be (A\rho^T+\nu^T)(\lambda+\mu A)=(A\mu^T+\lambda^T)(\nu+\rho A).\ee
But the last equation is automatically satisfied taking into account (\ref{eq:spc2a}) and (\ref{eq:spc2b}).
\end{proof}
We will use the following matrices $\cC$ and $\cC^{-1}$ that define the Cayley transformation and its inverse
\be \cC=\frac{1}{\sqrt{2}}\begin{bmatrix}I&iI\\I&-iI\end{bmatrix},\quad \cC^{-1}=\cC^*=\frac{1}{\sqrt{2}}\begin{bmatrix}I&I\\-iI&iI\end{bmatrix}.\ee
For the image $\mathrm{M}_{2n}(\BR)_c$ of $\mathrm{M}_{2n}(\BR)$ under the Cayley map we have
\be \mathrm{M}_{2n}(\BR)_c=\cC\,\mathrm{M}_{2n}(\BR)\,\cC^{-1}=\left\{\begin{bmatrix}\lambda&\mu\\\bar{\mu}&\bar{\lambda}\end{bmatrix}:\,\lambda,\mu\in \mathrm{M}_n(\BC)\right\}.\ee
Defining
\be \cK=\begin{bmatrix}I&0\\0&-I\end{bmatrix},\label{eq:K}\ee
\be \Sp_c=\cC\,\Sp(2n,\BR)\,\cC^{-1}\label{eq:spc}\ee
the following proposition lists the important properties of the matrices of the complex realization $\Sp_c$ of the symplectic group $\Sp(2n,\BR).$
\begin{proposition}\emph{(Folland \cite[pp. 175-176]{folland})}
If $S=\begin{bmatrix}\lambda&\mu\\\bar{\mu}&\bar{\lambda}\end{bmatrix}\in \mathrm{M}_{2n}(\BR)_c$ then the following are equivalent
\begin{enumerate}
\item[{\rm (i)}] $S\in \Sp_c.$
\item[{\rm (ii)}] $S^*\cK S=\cK.$
\item[{\rm (iii)}] $\lambda\lambda^*-\mu\mu*=I$ and $\lambda \mu^T=\mu \lambda^T.$
\item[{\rm (iv)}] $\lambda^*\lambda-\mu^T\bar{\mu}=I,$ and $\lambda^T\bar{\mu}=\mu^*\lambda.$
\end{enumerate}
In particular, as it follows from (ii),
\be \Sp_c =\mathrm{M}_{2n}\cap \mathrm{U}(n,n).\ee
Moreover, if $S=\left[\begin{smallmatrix}\lambda&\mu\\\bar{\mu}&\bar{\lambda}\end{smallmatrix}\right]\in\mathrm{Sp}_c,$ then
\begin{enumerate}
\item[{\rm (v)}] $\lambda$ is invertible and $||\lambda||\geq 1.$
\item[{\rm (vi)}] $||\lambda^{-1}\mu||^2=||\bar{\mu}\lambda^{-1}||^2=1-||\lambda||^2.$
\item[{\rm (vii)}] $\lambda^{-1}\mu=(\lambda^{-1}\mu)^T$ and $\bar{\mu}\lambda^{-1}=(\bar{\mu}\lambda^{-1})^T.$
\item[{\rm (viii)}] Denoting $\mathscr{U}(n)=\cC\,\mathrm{Sp}(2n,\BR)\cap (\mathrm{O}(2n))\,\cC^{-1},$ we have \be \mathscr{U}(n)   =\left\{\begin{bmatrix}\lambda&0\\0&\bar{\lambda}\end{bmatrix}:\,\lambda\in\mathrm{U}(n)\right\}=\left\{\begin{bmatrix}\lambda&\mu\\\bar{\mu}&\bar{\lambda}\end{bmatrix}\in\mathrm{Sp}_c:\mu=0\right\},\ee
and $\mathcal{U}(n)$ is a maximal compact subgroup of $\mathrm{Sp}_c.$
\end{enumerate}
\label{prop:fol1}\end{proposition}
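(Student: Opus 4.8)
The plan is to reduce everything to a single algebraic identity, namely that the Cayley matrix intertwines the two relevant bilinear forms: multiplying out $2\times 2$ block matrices gives
\be \cC J\cC^{-1}=-i\cK,\qquad\mbox{equivalently}\qquad \cC^{-1}\cK\cC=iJ.\ee
Granting this, (i)$\Leftrightarrow$(ii) comes out immediately. Write $S=\cC S_0\cC^{-1}$ with $S_0$ real, which is precisely what $S\in\mathrm{M}_{2n}(\BR)_c$ means; then $S_0^T=S_0^*=\cC^{-1}S^*\cC$, so $S_0^TJS_0=-i\,\cC^{-1}(S^*\cK S)\cC$. Hence $S_0$ is symplectic -- recall that for a real matrix the single relation $S_0^TJS_0=J$ already forces invertibility and $S_0JS_0^T=J$ -- exactly when $S^*\cK S=\cK$. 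Since $\cK^2=I$, the relation $S^*\cK S=\cK$ is equivalent to $S\cK S^*=\cK$; expanding each of these two matrix identities block by block, with $S=\left[\begin{smallmatrix}\lambda&\mu\\\bar\mu&\bar\lambda\end{smallmatrix}\right]$ and $S^*=\left[\begin{smallmatrix}\lambda^*&\mu^T\\\mu^*&\lambda^T\end{smallmatrix}\right]$, reproduces exactly the scalar pairs in (iii) and in (iv), and the displayed equality $\Sp_c=\mathrm{M}_{2n}(\BR)_c\cap\mathrm{U}(n,n)$ is just a restatement of (ii).

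Next I would read off (v)--(vii) from these block relations. From (iii) one has $\lambda\lambda^*=I+\mu\mu^*$, which is positive definite, so $\lambda$ is invertible and $\|\lambda\|^2=\|\lambda\lambda^*\|=1+\|\mu\|^2\ge 1$. For (vii), multiply the relation $\lambda\mu^T=\mu\lambda^T$ of (iii) on the left by $\lambda^{-1}$ and on the right by $(\lambda^T)^{-1}$ to obtain $\lambda^{-1}\mu=\mu^T(\lambda^T)^{-1}=(\lambda^{-1}\mu)^T$, and treat the relation $\lambda^T\bar\mu=\mu^*\lambda$ of (iv) the same way to obtain $\bar\mu\lambda^{-1}=(\lambda^T)^{-1}\mu^*=(\bar\mu\lambda^{-1})^T$. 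For (vi), substitute $\lambda\lambda^*=I+\mu\mu^*$ into $\|\lambda^{-1}\mu\|^2=\|\mu^*(\lambda\lambda^*)^{-1}\mu\|$ and use the push-through identity $\mu^*(I+\mu\mu^*)^{-1}\mu=I-(I+\mu^*\mu)^{-1}$, which gives $\|\lambda^{-1}\mu\|^2=1-(1+\|\mu\|^2)^{-1}=1-\|\lambda\|^{-2}$, and argue symmetrically for $\|\bar\mu\lambda^{-1}\|^2$ starting from $\lambda^*\lambda=I+\mu^T\bar\mu$.

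Finally (viii). Since $\cC$ is unitary, a real $S_0$ lies in $\mathrm{O}(2n)$ exactly when $S=\cC S_0\cC^{-1}$ is unitary, so $\mathscr{U}(n)=\{S\in\Sp_c:S^*S=I\}$. If such an $S$ is unitary, combining $S^*S=I$ with (ii) gives $S\cK=\cK S$, forcing the block-diagonal form $\mu=0$, whereupon unitarity of $S$ gives $\lambda\in\mathrm{U}(n)$; conversely $\mathrm{diag}(\lambda,\bar\lambda)$ with $\lambda$ unitary is unitary and satisfies (iii). This establishes both displayed descriptions of $\mathscr{U}(n)$, and $\mathscr{U}(n)$ is compact as a continuous image of $\mathrm{U}(n)$. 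The one point that is not mere bookkeeping with $\cC J\cC^{-1}=-i\cK$, and the part I expect to be the real obstacle, is maximality: here I would transport the question back to $\Sp(2n,\BR)$ -- conjugation by $\cC$ being a group isomorphism that carries $\Sp(2n,\BR)\cap\mathrm{O}(2n)$ onto $\mathscr{U}(n)$ -- and invoke the polar (Cartan) decomposition, i.e.\ the diffeomorphism $\Sp(2n,\BR)\cong K\times\BR^{d}$ with $K=\Sp(2n,\BR)\cap\mathrm{O}(2n)\cong\mathrm{U}(n)$ coming from the unique factorization $S_0=O\exp(X)$ into an orthogonal symplectic part and the exponential of a symmetric infinitesimally symplectic matrix, together with the Cartan fixed-point theorem: a compact subgroup fixes a point of the contractible, nonpositively curved factor, hence is conjugate into $K$, so any compact subgroup containing $K$ must equal it.
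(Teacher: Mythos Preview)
Your proposal is correct. The paper does not actually prove this proposition: it cites Folland and offers only a one-line remark explaining (iii)$\Leftrightarrow$(iv) via $S^{-1}=\cK S^*\cK$, which is precisely your observation that $S^*\cK S=\cK$ and $S\cK S^*=\cK$ expand block-by-block to (iv) and (iii). So your argument for (i)--(iv) coincides with the paper's hint, and for (v)--(viii) you supply what the paper leaves to the reference.

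One point worth noting: your computation for (vi) correctly produces $\|\lambda^{-1}\mu\|^2=1-\|\lambda\|^{-2}$, not the $1-\|\lambda\|^2$ printed in the statement; the latter is a typo (it would be negative whenever $\|\lambda\|>1$, which by (v) is the generic case). Your push-through calculation is the right way to see this.
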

\begin{rem}While most of the statements in the proposition above are rather straightforward to prove, the equivalence of (iii) and (iv) is less evident. It comes from the realization that they express the conditions $S^{-1}S=I$ and $SS^{-1}=I,$ while $S^{-1}=\cK S^*\cK=\left[\begin{smallmatrix}\lambda^*&-\mu^T\\-\mu^*&\lambda^T\end{smallmatrix}\right]$ according to (ii).
\end{rem}
\begin{rem}
For a general, not necessarily symplectic, block matrix $S=\left[\begin{smallmatrix}\lambda&\mu\\\nu&\rho\end{smallmatrix}\right]$ the conditions $S^*\cK S=\cK$ of being in $\mathrm{U}(n,n)$ are
\be \lambda^*\lambda-\nu^*\nu=\rho^*\rho-\mu^*\mu=I,\quad
\lambda^*\mu-\nu^*\rho=\mu^*\lambda-\rho^*\nu=0.\label{eq:unn}
\ee
\end{rem}
\subsection{The symplectic semigroup\label{sec:symsemi}}
We follow here, with slight changes, the exposition given in the monograph \cite{hilgert93} and the references therein, in particular \cite{brunet79,brunet80,brunet85,howe88,hilgert89,hilgert90}, and also Ch. 5 in Folland \cite{folland}. As there are several semigroups involved it is not always clear what is the exact definition of the oscillator or symplectic, or metaplectic, semigroup. First of all we have semigroups at the level of symplectic matrices, then we have their projective representations by their kernel operators in the Bargmann space. We start with matrices.

We consider the space $\BC^{2n}$ equipped with the pseudo-Hermitian form defined by the matrix $\cK$ - cf. Eq. (\ref{eq:K}). We set \be \cK(v)=v^*\cK v.\ee Vectors $v\in \BC^{2n}$ with $\cK(v)>0$ we call {\em positive}. Symplectic matrices from $\Sp(2n,\BC)$ which maps positive vectors into positive vectors form a semigroup. We denote this semigroup $\cS_{\cK}^+:$
\be \cS_{\cK}^+=\{S\in \Sp(2n,\BC): \cK(v)>0 \text{ implies }\cK(Sv)>0,\,\forall\,v\in \BC^{2n}\}.\label{eq:SKp}\ee
 A smaller semigroup consists of complex symplectic matrices $S$ for which $\cK (Sv)\geq \cK (v)$ for all $v\in\BC^{2n}.$  We denote this semigroup by $\cS_{\cK}.$
 \be \cS_{\cK}^+\supseteq \cS_{\cK}=\{S\in\Sp(2n,\BC):\cK (Sv)\geq \cK (v),\, \forall\,v\in\BC^{2n}\}.\label{eq:SK}\ee
 Its interior, that is the set of all complex symplectic matrices with  $\cK(Sv)> \cK(v)$ for all $0\neq v\in\BC^{2n}$ is denoted $\cS_{\cK}^o:$
 \be \cS_{\cK}^o=\{S\in\Sp(2n,\BC):\cK(Sv)> \cK(v),\quad\forall\quad 0\neq v\in\BC^{2n}\}.\ee
 Thus we have $\cS_{\cK}^o\subset \cS_{\cK}\subseteq \cS_{\cK}^+.$\footnote{The semigroup $\cS_{\cK}^+,$ though natural in the present context, is not being discussed in the quoted references. I do not know if it is indeed essentially larger than $\cS_\cK.$ In the quoted literature $\cK$ is often replaced by $-\cK$,  so that one can talk about contractions rather than expansions as in $\cK(Sv)> \cK(v).$} The group $\mathrm{Sp}_c$ is a subset of $\cS_{\cK}$ and it is a part of the boundary (Shilov boundary, cf. \cite{hilgert90}) of $\cS_{\cK}^o.$

Later on we will need the following Lemma
\begin{lemma}
If $S=\left[\begin{smallmatrix}\lambda&\mu\\\nu&\rho\end{smallmatrix}\right]$ is in $\cS_{\cK}^+,$ then $\lambda$ is invertible, $\nu\lambda^{-1}$ is symmetric with
\be ||\nu\lambda^{-1}||<1,\ee
and
\be \rho=(I+\nu\mu^T)\lambda^{-1T}=\lambda^{-1T}(1+\nu^T\mu)=\lambda^{-1T}+\nu\lambda^{-1}\mu.\label{eq:rho}\ee
\end{lemma}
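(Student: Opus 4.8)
The plan is to read off invertibility of $\lambda$ and the contraction bound directly from the defining property of $\cS_\cK^+$ by testing it on vectors supported on the first $\BC^n$ factor, and then to obtain the symmetry of $\nu\lambda^{-1}$ and the three expressions for $\rho$ as purely algebraic consequences of the symplectic relations (\ref{eq:spc1a}) and (\ref{eq:spc2a})--(\ref{eq:spc2b}), which involve the semigroup hypothesis only through the already-established invertibility of $\lambda$.

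First I would show $\lambda$ is invertible. If $\lambda w=0$ for some $0\neq w\in\BC^n$, then $v=\left[\begin{smallmatrix}w\\0\end{smallmatrix}\right]$ has $\cK(v)=\|w\|^2>0$, whereas $Sv=\left[\begin{smallmatrix}0\\\nu w\end{smallmatrix}\right]$ gives $\cK(Sv)=-\|\nu w\|^2\leq 0$, contradicting $S\in\cS_\cK^+$; hence $\lambda$ is injective, so invertible. With $\lambda$ invertible, set $B=\nu\lambda^{-1}$. Its symmetry follows from $\lambda^T\nu=\nu^T\lambda$ in (\ref{eq:spc2b}): left-multiplying by $(\lambda^T)^{-1}$ and right-multiplying by $\lambda^{-1}$ gives $B=(\lambda^T)^{-1}\nu^T=B^T$. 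For the norm estimate I would apply the defining inequality of $\cS_\cK^+$ to $v=\left[\begin{smallmatrix}\lambda^{-1}w\\0\end{smallmatrix}\right]$ with $w\neq 0$ arbitrary: since $\lambda^{-1}w\neq 0$ we have $\cK(v)=\|\lambda^{-1}w\|^2>0$, while $Sv=\left[\begin{smallmatrix}w\\Bw\end{smallmatrix}\right]$ gives $\cK(Sv)=\|w\|^2-\|Bw\|^2>0$. Thus $\|Bw\|<\|w\|$ for every nonzero $w$, and since the unit sphere of $\BC^n$ is compact and $w\mapsto\|Bw\|$ continuous, the supremum $\|B\|$ is attained and is therefore strictly less than $1$.

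For $\rho$: from $\lambda^T\rho-\nu^T\mu=I$ in (\ref{eq:spc2a}) I get $\rho=\lambda^{-1T}(I+\nu^T\mu)=\lambda^{-1T}+(\lambda^T)^{-1}\nu^T\mu=\lambda^{-1T}+\nu\lambda^{-1}\mu$, using $(\lambda^T)^{-1}\nu^T=B$; transposing $\lambda\rho^T-\mu\nu^T=I$ from (\ref{eq:spc1a}) yields $\rho=(I+\nu\mu^T)\lambda^{-1T}$, closing the chain of three equalities.

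I do not anticipate a genuine obstacle. The one step that is more than bookkeeping is the passage from the pointwise strict contraction $\|Bw\|<\|w\|$ to the operator-norm inequality $\|B\|<1$, which rests essentially on finite-dimensionality (compactness of the unit sphere) and would fail verbatim in infinite dimensions. A minor point of care is that the three displayed formulas for $\rho$ draw on different halves of the symplectic conditions, so it is cleanest to establish the middle expression first and derive the outer two by transposition.
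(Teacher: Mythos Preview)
Your argument is correct and follows essentially the same route as the paper: the paper also tests the defining property of $\cS_\cK^+$ on vectors of the form $\left[\begin{smallmatrix}u\\0\end{smallmatrix}\right]$ to get $\|\lambda u\|^2>\|\nu u\|^2$, substitutes $u=\lambda^{-1}v$ for the norm bound, and then points to the symplectic relations (\ref{eq:spc1a})--(\ref{eq:spc2b}) for symmetry and the $\rho$-identities. The only cosmetic difference is that the paper obtains invertibility of $\lambda$ by specializing a later lemma (Lemma~\ref{lem:inv1} with $A=0$) rather than arguing directly as you do, and it omits the compactness remark that you (correctly) make explicit when passing from the pointwise strict inequality to $\|\nu\lambda^{-1}\|<1$.
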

\label{lem:skp}\begin{proof}
Invertibility of $\lambda$ follows by specializing the result of Lemma \ref{lem:inv1} to the case of $A=0.$ If $u\in \BC^n,$ $u\neq 0,$ then the vector $\left[\begin{smallmatrix}u\\0\end{smallmatrix}\right]$ is positive. Therefore the vector $\left[\begin{smallmatrix}\lambda u\\\nu u\end{smallmatrix}\right]=S\left[\begin{smallmatrix}u\\0\end{smallmatrix}\right]$ should be also positive, i.e. \be ||\lambda u||^2-||\nu u||^2>0.\ee Setting, in particular, $u=\lambda^{-1}v,$ we get
\be ||\nu\lambda^{-1} v||^2<||v||^2.\ee Therefore $||\nu\lambda^{-1}||<1.$ Finally, Eq. (\ref{eq:rho}) follows from Eqs. (\ref{eq:spc1a}-\ref{eq:spc2b})
\end{proof}
\section{The homogeneous domain $\cD_n$ parametrizing the squeezed coherent states \label{sec:hd}}
Within the framework of quantum mechanics the symplectic group acts on the Hilbert space of quantum states via the so called {\em metaplectic representation\,}. Among its orbits there is an important orbit of {\em squeezed coherent states\,}. Geometrically the manifold of squeezed coherent states can be realized as a bounded complex homogeneous domain that we will describe now.

Using Proposition \ref{prop:fol1}, (vi) and (vii) we can associate to each matrix $\mathcal{S}=\left[\begin{smallmatrix}\lambda&\mu\\\bar{\mu}&\bar{\lambda}\end{smallmatrix}\right]$ the complex symmetric matrix $h(S)=A=\bar{\mu}\lambda^{-1},$ with $||A||^2=||A^*A||=||AA^*||<1.$ Let $\cD_n$ denote the space of all such matrices
\be \cD_n=\left\{A\in \mathrm{M}_n(\BC):\,A=A^T\,\mbox{and } ||A||<1\right\}.\ee
Thus we have the following map $h:\mathrm{Sp}_c\rightarrow \cD_n$
\be h:\mathrm{Sp}_c\ni \begin{bmatrix}\lambda&\mu\\\bar{\mu}&\bar{\lambda}\end{bmatrix} \mapsto A=\bar{\mu}\lambda^{-1}\in\cD_n.\ee
The following results summarize the important properties of $\cD_n.$
\begin{lemma}
For every $A\in \cD_n$ and every $S=\left[\begin{smallmatrix}\lambda&\mu\\\nu&\rho\end{smallmatrix}\right]\in \cS_{\cK}^+$ the matrix $A'=\lambda+\mu A$ is invertible.
\label{lem:inv1}\end{lemma}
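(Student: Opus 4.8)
The plan is to prove that $\lambda+\mu A$ is \emph{injective} (hence invertible, since it is a square complex matrix) by feeding a carefully chosen test vector into the defining property of $\cS_\cK^+$ --- namely that it maps $\cK$-positive vectors to $\cK$-positive vectors.

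Concretely, I would start from a putative kernel element: suppose $u\in\BC^n$ satisfies $(\lambda+\mu A)u=0$, and aim to show $u=0$. The crucial observation is to look at the vector $v=\left[\begin{smallmatrix}u\\ Au\end{smallmatrix}\right]\in\BC^{2n}$, chosen precisely so that the upper block of $Sv$ equals $\lambda u+\mu Au=(\lambda+\mu A)u$. A one-line computation gives $\cK(v)=v^*\cK v=\|u\|^2-\|Au\|^2$, and since $A\in\cD_n$ means $\|A\|<1$, we get $\cK(v)\ge(1-\|A\|^2)\|u\|^2>0$ whenever $u\ne 0$; thus $v$ is positive.

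The next step is to exploit $S\in\cS_\cK^+$: the image $Sv$ must then also be positive, i.e. $\cK(Sv)>0$. But by construction the upper block of $Sv$ vanishes, so $Sv=\left[\begin{smallmatrix}0\\ (\nu+\rho A)u\end{smallmatrix}\right]$ and hence $\cK(Sv)=-\|(\nu+\rho A)u\|^2\le 0$. This contradicts $\cK(Sv)>0$ unless $u=0$, which completes the argument. (One should note that it is the \emph{strict} inequality in the definition of ``positive'' that is doing the work here, and that $A\in\cD_n$ enters only through $\|A\|<1$.)

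I do not expect a genuine obstacle: the only thing to spot is the right test vector $\left[\begin{smallmatrix}u\\ Au\end{smallmatrix}\right]$, after which everything is immediate and uses nothing about $S$ beyond membership in $\cS_\cK^+$ (not even the symplectic relations). As a sanity check, specializing to $A=0$ recovers the invertibility of $\lambda$ for $S\in\cS_\cK^+$, which is exactly what was invoked in the proof of Lemma~\ref{lem:skp}.
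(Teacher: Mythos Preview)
Your argument is correct and is essentially the paper's own proof: suppose a kernel vector $u$, form the test vector $\left[\begin{smallmatrix}u\\Au\end{smallmatrix}\right]$, observe it is $\cK$-positive since $\|A\|<1$, and derive a contradiction from $\cK(Sv)\le 0$. The only differences are cosmetic (your $u$ is the paper's $v$, and you phrase the positivity via $\|Au\|\le\|A\|\,\|u\|$ rather than $A^*A<I$).
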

\begin{proof}
With the assumptions as in the statement of the lemma suppose, to the contrary, that there exists $v\neq 0$ in $\BC^n$ such that $(\lambda+\mu A)v=0.$
From $A^*A<1$ we have that $\cK\left(\left[\begin{smallmatrix}v\\Av\end{smallmatrix}\right]\right)=v*(I-A*AA)v>0.$ But
\be S\begin{pmatrix}v\\Av\end{pmatrix}=\begin{pmatrix}\lambda&\mu\\\nu&\rho\end{pmatrix}\begin{pmatrix}v\\Av\end{pmatrix}=\begin{pmatrix}0\\\nu v+\rho Av\end{pmatrix},\ee
with $\cK\left(S\begin{pmatrix}v\\Av\end{pmatrix}\right)\leq 0,$ contrary to the assumption that $S$ maps positive vectors into positive vectors.
\end{proof}
\begin{proposition}
\label{prop:fol2}
For $\mathcal{S},\mathcal{S}'\in \Sp_c,$ $h(\mathcal{S})=h(\mathcal{S}')$ iff $\mathcal{S}'=\mathcal{S}\mathcal{U},$ where $\mathcal{U}\in\mathscr{U}(n).$ Moreover, the map $h$ is onto; in fact, if $A\in\cD_n,$ then $\iota(A)$ defined by
\be \iota(A)=\begin{bmatrix}\Lambda&A^*\bar{\Lambda}\\A\Lambda&\bar{\Lambda}\end{bmatrix},\quad \Lambda=(I-A^*A)^{-1/2},\,\bar{\Lambda}=(I-AA^*)^{-1/2}\label{eq:iota}\ee
is in $\Sp_c$ and, for all $A\in \cD_n,$ we have $h(\iota(A))=A.$ For $S=\left[\begin{smallmatrix}\lambda&\mu\\\bar{\mu}&\bar{\lambda}\end{smallmatrix}\right]\in\Sp_c$ we have
\be S\iota(A)=\iota(S\cdot A),\quad \mbox{where } \mathcal{S}\cdot A=(\bar{\lambda}A+\bar{\mu})(\mu A+\lambda)^{-1}.\label{eq:hom}\ee
\end{proposition}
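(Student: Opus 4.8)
All three assertions reduce, via routine block algebra, to the identities of Proposition~\ref{prop:fol1} together with Lemma~\ref{lem:inv1}, and I would take them in turn. \emph{The fibres of $h$.} The ``if'' direction is immediate: for $\mathcal U=\left[\begin{smallmatrix}u&0\\0&\bar u\end{smallmatrix}\right]$ with $u\in\mathrm{U}(n)$ one has $\mathcal S\mathcal U=\left[\begin{smallmatrix}\lambda u&\mu\bar u\\\bar\mu u&\bar\lambda\bar u\end{smallmatrix}\right]$, so $h(\mathcal S\mathcal U)=(\bar\mu u)(\lambda u)^{-1}=\bar\mu\lambda^{-1}=h(\mathcal S)$. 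For the converse, set $\mathcal U=\mathcal S^{-1}\mathcal S'\in\Sp_c$ (a group) and use $\mathcal S^{-1}=\cK\mathcal S^{*}\cK=\left[\begin{smallmatrix}\lambda^{*}&-\mu^{T}\\-\mu^{*}&\lambda^{T}\end{smallmatrix}\right]$, which follows from Proposition~\ref{prop:fol1}(ii). The upper-right block of $\mathcal U$ is $\lambda^{*}\mu'-\mu^{T}\bar\lambda'$, and the hypothesis $h(\mathcal S)=h(\mathcal S')=A$ says $\mu=A^{*}\bar\lambda$, $\mu'=A^{*}\bar\lambda'$ (note $A^{*}=\bar A$ since $A=A^{T}$), so $\mu^{T}=\lambda^{*}A^{*}$ and that block is $\lambda^{*}A^{*}\bar\lambda'-\lambda^{*}A^{*}\bar\lambda'=0$. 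Thus $\mathcal U\in\Sp_c$ has vanishing off-diagonal block, hence $\mathcal U\in\mathscr{U}(n)$ by Proposition~\ref{prop:fol1}(viii), and $\mathcal S'=\mathcal S\mathcal U$.

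\emph{Surjectivity of $h$.} Since $A=A^{T}$ we have $A^{*}=\bar A$ and $\overline{(I-A^{*}A)^{-1/2}}=(I-AA^{*})^{-1/2}$, so the bottom row of $\iota(A)$ is the entrywise conjugate of the top row and $\iota(A)\in\mathrm{M}_{2n}(\BR)_c$. To see $\iota(A)\in\Sp_c$ I would verify criterion~(iii) of Proposition~\ref{prop:fol1} with $\lambda=\Lambda$, $\mu=A^{*}\bar\Lambda$: $\Lambda,\bar\Lambda$ are Hermitian positive, and the commutation identity $A^{*}(I-AA^{*})^{-1}=(I-A^{*}A)^{-1}A^{*}$ (equivalently $(I-A^{*}A)A^{*}=A^{*}(I-AA^{*})$) yields $\lambda\lambda^{*}-\mu\mu^{*}=(I-A^{*}A)^{-1}-(I-A^{*}A)^{-1}A^{*}A=I$ and $\lambda\mu^{T}=(I-A^{*}A)^{-1}\bar A=A^{*}(I-AA^{*})^{-1}=\mu\lambda^{T}$. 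Finally $h(\iota(A))=(A\Lambda)\Lambda^{-1}=A$, so $h$ is onto.

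\emph{Equivariance.} Multiplying out in blocks, the first column of $S\iota(A)$ is $\bigl((\lambda+\mu A)\Lambda,\,(\bar\mu+\bar\lambda A)\Lambda\bigr)^{T}$, and $\lambda+\mu A$ is invertible by Lemma~\ref{lem:inv1} (applicable as $\Sp_c\subseteq\cS_{\cK}^{+}$). Hence $S\iota(A)\in\Sp_c$ and $h(S\iota(A))=(\bar\mu+\bar\lambda A)\Lambda\,\Lambda^{-1}(\lambda+\mu A)^{-1}=(\bar\lambda A+\bar\mu)(\mu A+\lambda)^{-1}=S\cdot A$; in particular $S\cdot A\in\cD_n$, so $\iota(S\cdot A)$ is defined and $h(S\iota(A))=S\cdot A=h(\iota(S\cdot A))$. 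By the first part, $S\iota(A)=\iota(S\cdot A)\,\mathcal U$ for a unique $\mathcal U\in\mathscr{U}(n)$, and comparing $(1,1)$-blocks identifies the $\mathrm{U}(n)$-part of $\mathcal U$ with the unitary factor in the polar decomposition of $(\lambda+\mu A)\Lambda$. This gives the equivariance as an equality of points of $\cD_n\cong\Sp_c/\mathscr{U}(n)$; the displayed identity $S\iota(A)=\iota(S\cdot A)$ is to be read in that quotient, and holds as a literal matrix equation exactly when that polar factor is $I$.

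\emph{Expected obstacle.} The first two parts are routine once Proposition~\ref{prop:fol1} is available. In the third, the identity $h(S\iota(A))=S\cdot A$ — that the $\Sp_c$-action on the fibres of $h$ descends to the linear fractional action on $\cD_n$ — is the real content, and, while it only needs careful block bookkeeping, it is the step most worth doing slowly. The genuinely delicate point is the status of the residual $\mathscr{U}(n)$-factor: to obtain $S\iota(A)=\iota(S\cdot A)$ with no correction one needs $(\lambda+\mu A)(I-A^{*}A)^{-1/2}$ to be Hermitian positive for every $S\in\Sp_c$, which is where the full strength of Proposition~\ref{prop:fol1}(iii)--(iv) is required (or where one simply reads the equality modulo $\mathscr{U}(n)$, i.e.\ as an identity in the homogeneous domain $\cD_n$).
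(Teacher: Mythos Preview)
Your argument for the first two parts matches the paper's: compute $\mathcal U=\mathcal S^{-1}\mathcal S'$ and use Proposition~\ref{prop:fol1}(viii) to place it in $\mathscr U(n)$; then verify $\iota(A)\in\Sp_c$ via the intertwining relation $A\,f(A^*A)=f(AA^*)\,A$ (the paper states this for power series, you use the special case $A^*(I-AA^*)^{-1}=(I-A^*A)^{-1}A^*$). Your bookkeeping is in fact cleaner than the paper's, which has a few typographical slips in the block computations.

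On the third part you are more scrupulous than the paper and, I believe, correctly so. The paper simply asserts that ``Eq.~(\ref{eq:hom}) follows from the very definitions of $\iota$ and $h$,'' but as you observe, what the definitions yield is $h(S\iota(A))=S\cdot A$, hence $S\iota(A)=\iota(S\cdot A)\,\mathcal U$ for some $\mathcal U\in\mathscr U(n)$. The literal matrix identity $S\iota(A)=\iota(S\cdot A)$ is false in general: already for $A=0$ and $S=\left[\begin{smallmatrix}u&0\\0&\bar u\end{smallmatrix}\right]$ with $u\in\mathrm U(n)\setminus\{I\}$ one gets $S\cdot 0=0$, $\iota(0)=I$, yet $S\iota(0)=S\neq I=\iota(S\cdot 0)$. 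So the equality must be read in $\Sp_c/\mathscr U(n)\cong\cD_n$, exactly as you say. Your remark that the residual unitary is the polar factor of $(\lambda+\mu A)\Lambda$ is a nice way to make this explicit. In short: same approach, but your treatment of the equivariance is the more accurate one.
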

\begin{proof}
Let $\mathcal{S}=\left[\begin{smallmatrix}\lambda&\mu\\\bar{\mu}&\bar{\lambda}\end{smallmatrix}\right],\,\mathcal{S'}=\left[\begin{smallmatrix}\lambda'&\mu'\\\bar{\mu'}&\bar{\lambda'}\end{smallmatrix}\right]\in\Sp_c,$ and assume that $h(\mathcal{S})=h(\mathcal{S}'),$ that is we have:
\be \bar{\mu}\lambda^{-1}=\bar{\mu'}{\lambda'}^{-1}.\label{eq:ssp}\ee
Using the above we have that $\left[\begin{smallmatrix}\lambda'&\mu'\\\bar{\mu'}&\bar{\lambda'}\end{smallmatrix}\right]=\left[\begin{smallmatrix}\lambda&\mu\\\bar{\mu}&\bar{\lambda}\end{smallmatrix}\right]\left[\begin{smallmatrix}\lambda^{-1}\lambda'&0\\0&\bar{\lambda}\bar{\lambda'}\end{smallmatrix}\right],$
i.e $\mathcal{S}=\mathcal{S}'U,$ where $U=\left[\begin{smallmatrix}\lambda^{-1}\lambda'&0\\0&\bar{\lambda}\bar{\lambda'}\end{smallmatrix}\right].$ Since, $U={S'}{-1}S,$ we have that $U\in\Sp_c,$ and it follows from Proposition \ref{prop:fol1}, (viii) that $U\in \mathcal{U}(n).$

For $A=A^T$ we have $\overline{A^*A}=AA^*.$ Therefore, with $\Lambda$ defined as the inverse of positive square root of $I-A^*A>0,$ we have $\bar{\Lambda}=(I-AA^*)^{-1/2}.$ Moreover, for every nonnegative integer
$p$ we have $A(A^*A)^p=(AA^*)^pA$ and $A^*(AA^*)^p=(A^*A)^pA^*,$ therefore for any analytic function $f$ we have $Af(A^*A)=f(AA^*)A.$ In particular $\Lambda=\bar{\Lambda}A^*$ and $A\bar{\Lambda}=\Lambda A^*.$ Then it easily follows that $\iota(A)\in \Sp_c,$ while $h(\iota(A))=A$ and Eq. (\ref{eq:hom}) follow from the very definitions of $\iota$ and $h.$
\end{proof}
Thus $\cD_n$ is a homogeneous space for $\Sp_c$ that can be identified with the quotient $\Sp_c/\mathcal{U}(n).$

We will see in section \ref{sec:mr} that the semigroup $\cS_{\cK}^+$ acts on $\cD_n$ using the natural extension of Eq. (\ref{eq:hom}). For $S=\left[\begin{smallmatrix}\lambda&\mu\\\nu&\rho\end{smallmatrix}\right]$ we set
\be S\cdot A=(\rho A+\nu)(\mu A+\lambda)^{-1}.\label{eq:sas}\ee
If $A=0,$ we get $S\cdot 0=\nu\lambda^{-1},$ and we know that $\nu\lambda^{-1}$ is in $\cD_n$ from Lemma \ref{lem:skp}. For a general $A\in \cD_n$ we can write $A=\iota(A)\cdot 0,$ and, since $\iota(A)$ is an isometry, $S\iota(A)$ is again in $\cS_{\cK}^+.$ Therefore $S\cdot A=(S\cdot \iota(A))\cdot 0$ has norm less than 1, thus $S\cdot A$ is in $\cD_n$ for every $S\in\cS_{\cK}^+$ and every $A\in\cD_n.$
\section{The Bargmann-Fock space\label{sec:bfs}}
The discussion of the coherent squeezed states, the metaplectic representation of the symplectic group, and the oscillator semigroup is most conveniently done in the Bargmann-Fock space of holomorphic functions. That is why we choose the Bargmann-Fock rather than the Schr\"{o}dinger representation here.  The Bargmann-Fock space $\cF_n$ is the space of entire functions of the variable $z\in\BC^n,$ square integrable with respect to the measure
\be d\mu(z)=\pi^{-n}\exp(-|z|^2)d\lambda(z),\label{eq:dmu} \ee where $d\lambda(z)$ is the Lebesgue measure on $\BC^n.$

The space $\cF_n$ has the remarkable property of already being  a complete Hilbert space (thus no completion is needed). Bargmann \cite{bargmann2} defines the isometry $\mathcal{B}$ from the standard Schr\"{o}dinger representation space $L^2(\BR^n)$ to $\cF_n$ by
\be (\mathcal{B}\psi)(z)=\pi^{-n/4}\exp \{-z^2/2\}\int \exp\{-x'^2/2+\sqrt{2}
z\cdot x'\}\,\psi(x')\,d^nx'.\label{eq:bpsi}\ee
The inverse transform is given by
\be \psi(x)=\pi^{-n/4}e^{-\frac{x\cdot x}{2}}\int e^{-\frac{\bar{z}\cdot \bar{z}}{2}+\sqrt{2}\bar{z}\cdot x} f(z)\,d\mu(z),\label{eq:invb}\ee
for $f(z)=(\mathcal{B}\psi)(z).$

In particular, if $\phi_0(x)$ is the standard Gaussian function, the ground state of the $n$-dimensional harmonic oscillator
\be \phi_0(x)=\pi^{-n/4}e^{-|x|^2/2},\ee
then
\be B\phi_0(z)=\mathbf{1}(z)=1.\ee

The quantum mechanical canonical position and momentum operator $q_k$ and $p_k$ in the Bargmann-Fock space are given by
\be q_k=2^{-1/2}(z_k+d_k),\quad p_k=i2^{-1/2}(z_k-d_k),\ee
where
\be (z_kf)(z)=z_kf(z),\quad (d_kf)(z)=\frac{\partial f(z)}{\partial z_k},\ee
and $z_k,d_k$ correspond to harmonic oscillator creation and annihilation operators (assuming units in which $\hbar=\omega=m=1.$)
\subsection{Gaussian kernels composition\label{sec:gk}}
Bounded operators in $\cF_n$ are realized as integral kernels. A kernel $K(z,w)$ defines the operator $T_K$
\be (T_K f)(z)=\int K(z,w)f(w)d\mu(w).\ee
With $\Lambda=(A,B,C)$, $A,B,C\in M_n(\BC),$ we are interested in Gaussian kernels of the form
\be K_\Lambda(z,w)=\exp\{\frac{1}{2}z\cdot Az+\frac{1}{2}\bar{w}\cdot B\bar{w}+z\cdot C\bar{w}\},\ee
where $A,B$ are symmetric matrices. Brunet and Kramer \cite[p. 211]{brunet80}, using the Itzykson integral formula (cf. Appendix \ref{sec:a1}) calculate explicitely the result of the composition $T_{\Lambda_1}T_{\Lambda_2}$ of two operators $T_{\Lambda_1}$ and $T_{\Lambda_2}$ determined by such kernels, where $\Lambda_i=(A_i,B_i,C_i),\, i=1,2.$ The result is then represented by the kernel $K_{\Lambda_1}\cdot K_{\Lambda_2}$ given by
\be K_{\Lambda_1}\cdot K_{\Lambda_2}=\kappa(\Lambda_1,\Lambda_2)K_\Lambda,\quad \Lambda=(A,B,C),\ee
with
\begin{eqnarray}
 A&=&A_1+[C_1(I-A_2B_1)^{-1}A_2C_1^T]^s,\label{eq:gk1}\\
 B&=&B_2+[C_2^TB_1(I-A_2B_1)^{-1}C_2]^s,\label{eq:gk2}\\
 C&=&C_1(I-A_2B_1)^{-1}C_2\label{eq:gk3},
 \end{eqnarray}
 and
 \be \kappa(\Lambda_1,\Lambda_2)=\det(I-A_2B_1)^{-1/2},\ee
 where $X^s=\frac{1}{2}(X+X^T).$ When using the above formula special attention should be paid to the possible ambiguity in sign when taking the square root. It is this ambiguity that is responsible for the projectivity property of the metaplectic representation. In quantum theory proportional vectors define the same quantum state, therefore this ambiguity is of no concern in physical applications we are concerned with.

\subsection{Unnormalized coherent states and reproducing kernel}
For each $w\in \BC^n$ let
\be e_w(z)=e^{ w\cdot z}.\label{eq:ew}\ee
In particular $e_0=\mathbf{1}.$
Then $\{e_w:\,w\in\BC^n\}$ is a total set in $\cF_n$ with
\be (e_w,e_{w'})=e^{w\cdot\overline{w'}}.\ee
We have the reproducing kernel property: for every $f\in\cF_n$
\be f(z)=\int e_{\overline{w}}(z)f(w)\,d\mu(w).\ee
The main advantage of using the space $\cF_n$ is in the following: every bounded linear  operator $A$ on $\cF_n$ is represented by its kernel \be A(z,w)=Ae_{\overline{w}}(z).\ee
We have
\be Af(z)=\int A(z,w)f(w)\,d\mu(w).\ee
\subsection{The metaplectic representation and oscillator semigroup\label{sec:mr}}
Bargmann \cite{bargmann2} defines the projective unitary $U$ representation of the symplectic group $\Sp(2n,\BR)$ in $\cF_n$
using the following kernels for the operators $U_S,$ $S=\left[\begin{smallmatrix}\lambda&\mu\\\bar{\mu}&\bar{\lambda}\end{smallmatrix}\right]\in\mathrm{Sp}_c$
\be U_S(z,w)=(\det{\lambda})^{-1/2}\exp\left\{\frac{1}{2} z\cdot\bar{\mu}\lambda^{-1}z-\frac{1}{2}\bar{w}\cdot \lambda^{-1}\mu\bar{w}+ z\cdot \lambda^{-1T}\bar{w} \right\}.\label{eq:metarep}\ee
We then have
\be U_S \cdot U_{S'}=\pm U(SS').\ee
The mapping $S\mapsto U_S$ above is one way of defining the {\em metaplectic representation}.\footnote{ The metaplectic representation is not irreducible. It is the direct sum of two unitary highest weight modules of the double cover of the symplectic group.} Analytic continuation of this representation leads to the representation of the symplectic semigroup introduced in section (\ref{sec:symsemi}). We will use the notation close to that used in \cite{brunet80}.

Let $S=\left[\begin{smallmatrix}\lambda&\mu\\\nu&\rho\end{smallmatrix}\right]$ be an element of the semigroup $\cS_{\cK}^+.$ From Lemma \ref{lem:inv1} we know that then $\lambda$ is invertible, therefore the following integral kernel in the Bargmann space $\cF_n$ is well defined:
\be K_S(z,w)=(\det\lambda)^{-1/2}\exp\left\{\frac{1}{2}z\cdot \nu\lambda^{-1}z-\frac{1}{2}\bar{w}\cdot\lambda^{-1}\mu\bar{w}+\lambda^{-1}z\cdot\bar{w}\right\}.\label{eq:ks}\ee
If $S$ is in $\mathrm{Sp}_c$, then the formula above reduces to (\ref{eq:metarep}), therefore it defines a unitary operator. On the other hand, if $S\in S_\cK^+,$ then, as it is shown in \cite[Lemma 5.1]{hilgert89}, the formula defines a Hilbert-Schmidt operator (the kernel is square integrable). While I do not know when exactly the operator defined by the kernel $K_S$ is bounded, it is possible to calculate explicitly its action on squeezed states - which is important in applications.

 We can now specialize the results in section \ref{sec:gk} in order to calculate the result of the composition of two kernels $K_{S_1},K_{S_2}$
 corresponding to two elements $S_i=\left[\begin{smallmatrix}\lambda_i&\mu_i\\\nu_i&\rho_i\end{smallmatrix}\right]$ of the semigroup $\cS_{\cK}^+.$ Brunet and Kramer \cite[p. 212]{brunet80} calculate the result for kernels of the type $K_S,$ but without the numerical determinant factor. Taking into account these factors, as in our Eq. (\ref{eq:ks}) simplifies the result. After simple algebra we get:
 \be K_{S_1}\cdot K_{S_2}=\pm K_{S_1S_2}.\ee
\section{Squeezed states}
We define coherent squeezed states parametrized by the complex symmetric matrices $A\in\cD_n$ using the embedding $\iota:\cD_n\rightarrow \Sp_c$ defined in Eq. (\ref{eq:iota}) and the metaplectic representation (\ref{eq:metarep}) as follows
\be e_A=U_{\iota(A)}\mathbf{1}\in\cF_n.\ee
Taking into account this definition and using the integration formula (\ref{eq:itz}) we obtain the explicit formula for squeezed states in the Bargmann representation:
\be e_A(z)=\det(I-A^*A)^{1/4}\,e^{z\cdot Az/2}.\label{eq:sqb}\ee
Making use of the inverse Bargmann transform (\ref{eq:invb}) and Eq. (\ref{eq:itz}) we obtain the expression for the squeezed
states, denoted $\psi_A,$ in the Schr\"{o}dinger representation:
\be \psi_Z(x)= c\,\left(\frac{\det X}{\pi^n} \right)^{1/4}\,e^{- x\cdot Zx/2},\ee
where
\be Z=Z(A)=\frac{I-A}{I+A}=X+iY,\quad (X,Y)\in \mbox{Mat}_{sym}(n,\BR),\,X>0,\ee
and $c\in\BC,\, |c|=1,$ is the phase factor:
\be c=\frac{\det(I+Z)^{1/2}}{|\det(I+Z)|^{1/2}}.\ee
As it is shown in \cite{hilgert89} one can simply say that the squeezed states are Gaussian functions (or ``gaussons''). In the Schr\"{o}dinger representation they are functions of the form $f(x)=e^{-\frac{1}{2}x\cdot Zx},$ where $Z$ is a complex symmetric matrix with positive definite real part. The Bargmann transform of such a function is
\be (\mathcal{B}f)(z)=\pi^{n/4}(\det\,\frac{X+1}{2})^{-\frac{1}{2}}e^{-\frac{1}{2}z\cdot \frac{X-I}{X+I}z}.\ee
The operator Cayley transform $X\mapsto \frac{X-I}{X+1}$ is a bijection between complex symmetric matrices $X$ with positive definite real part and complex symmetric matrices $Z=\frac{X-I}{X+1}$ with $Z^*Z<I.$
\subsection{Action of the symplectic semigroup on squeezed states\label{sec:ass}}
As described in section \ref{sec:mr} every element $S=\left[\begin{smallmatrix}\lambda&\mu\\\nu&\rho\end{smallmatrix}\right]$ in $\cS_\cK^+$ determines a kernel $K_S$ of the form
\be K_S(z,w)=(\det\lambda)^{-1/2}\exp\left\{\frac{1}{2}z\cdot \nu\lambda^{-1}z-\frac{1}{2}\bar{w}\cdot\lambda^{-1}\mu\bar{w}+\lambda^{-1}z\cdot\bar{w}\right\}.\ee
We can now use the formulas (\ref{eq:gk1}-\ref{eq:gk3}) to calculate the action of operators $K_S$ on squeezed states $e_A$ defined in Eq. (\ref{eq:sqb}). To this order we set $A_1=\nu\Lambda^{-1},B_1=-\lambda^{-1}\mu,C_1=\lambda^{-1T},A_2=A,B_2=C_2=0.$ The result is then proportional to the squeezed state $e_{A'},$ where, from (\ref{eq:gk1}),
\be A'=\nu\lambda^{-1}+[\lambda^{-1T}(I+A\lambda^{-1}\mu)^{-1}A\lambda^{-1}]^s.\label{eq:ap1}\ee
We first notice that the term in the square brackets is already symmetric, that is it does not need the symmetrization. Indeed, its symmetry is equivalent to the symmetry of $(I+A\lambda^{-1}\mu)^{-1}A.$ Since $A$ and $\lambda^{-1}\mu$ are both symmetric,  it means the condition
$$(I+A\lambda^{-1}\mu)^{-1}A=A(I+\lambda^{-1}\mu A)^{-1}$$ must be satisfied. But the last condition is equivalent to
$A(I+\lambda^{-1}\mu A)=(I+A\lambda^{-1}\mu)A,$ which evidently holds. Therefore we can write Eq. (\ref{eq:ap1}) as
\be  A'=\nu\lambda^{-1}+\lambda^{-1T}(I+A\lambda^{-1}\mu)^{-1}A\lambda^{-1}.\ee
We want to show that
\be A'=(\rho A+\nu)(\mu A+\lambda)^{-1},\ee
as in Eq. (\ref{eq:sas}). To this end we calculate $A'(\mu A+\lambda)$, we will show that
$$A'(\mu A+\lambda)=(\rho A+\nu),$$ which will prove our statement. We have
\begin{align}
A'(\mu A+\lambda)&=(\nu\lambda^{-1}+\lambda^{-1T}(I+A\lambda^{-1}\mu)^{-1}A\lambda^{-1})(\lambda+\mu A)\\
&=\nu+\lambda^{-1T}(I+A\lambda^{-1}\mu)^{-1}A\\
&+\nu\lambda^{-1}\mu A+\lambda^{-1T}(I+A\lambda^{-1}\mu)^{-1}A\lambda^{-1}\mu A\nu\notag\\
&+\nu\lambda^{-1}\mu A+\lambda^{-1T}(I+A\lambda^{-1}\mu)^{-1}(I+A\lambda^{-1}\mu)A\\
&=\nu+(\nu\lambda^{-1}\mu+\lambda^{-1T})A\notag\\
&=\nu +\rho A,
\end{align}
where we have used Eq. (\ref{eq:rho}) in the last equality.
It remains to find the proportionality constant. Denoting $T_S$ the operator defined by the kernel $K_S$ we know that
\be T_Se_A=c(S,A)e_{A'},\quad\mbox{where } A'=(\rho A+\nu)(\mu A+\lambda)^{-1}.\ee
From the formulas above we can easily find that
\be |c(S,A)|^2=\frac{\det(I-A^*A)^{1/2}}{|\det\left(A^*(\mu^*\mu-\rho^*\rho)A+\delta+\delta^*+\lambda^*\lambda-\nu^*\nu)\right)|^{1/2}},\label{eq:prob}\ee
where
\be \delta=A^*(\mu^*\lambda-\rho^*\nu).\ee
For $S\in\mathrm{Sp}_c$ we have $|c(S,A)|=1$ owing to the relations (\ref{eq:unn}). In general, for nontrivial semigroup elements, the formulas (\ref{eq:prob}) are important since they determine state-dependent probabilities of exciting monitoring devices whose action on quantum states is reflected by quantum jumps implemented by the semigroup operators.
\subsection{The case of $n=1.$}
Consider the simplest case of $n=1.$ The domain $\cD_1$ is the open unit disk in $\BC.$ Let $A\in \cD_1,$ that is $|A|<1.$ Writing $A\in\BC$ in a polar form $A=re^{i\phi},$ the squeezed state $e_A$ in the Bargmann representation is (cf. Eq. (\ref{eq:sqb})):
\be e_A(z)=(1-|r|^2)^{1/4}e^{Az^2/2}.\ee
The symmetric singular value decomposition (\ref{eq:ssvd}) takes the form
\be A=U\Sigma U^T= e^{i\phi/2}re^{i\phi/2}.\ee
The matrices $Q$ and $D$ (Eqs. (\ref{eq:q}-\ref{eq:d})) are
\begin{eqnarray} Q&=&\begin{pmatrix}\cos \frac{\phi}{2}&-\sin\frac{\phi}{2}\\\sin\frac{\phi}{2}&\cos\frac{\phi}{2}\end{pmatrix},\\
D&=&\begin{bmatrix}\frac{1-r}{1+r}&0\\0&\frac{1+r}{1-r}\end{bmatrix}.
\end{eqnarray}
In variables $p,q$:
\begin{eqnarray}
\tilde{q}&=&q\cos\frac{\phi}{2}-p\sin\frac{\phi}{2}\\
\tilde{p}&=&q\sin\frac{\phi}{2}+p\cos\frac{\phi}{2}
\end{eqnarray}
the ellipse semi-axes in the Wigner distribution (\ref{eq:wea}) for the squeezed state
are
\be a=\sqrt{\frac{1+r}{1-r}},\, b=\sqrt{\frac{1-r}{1+r}}.\ee
Fig. \ref{fig:gauss1} shows the density plot of the Wigner distribution for  $r=1/2,\phi=\pi/4.$
\newpage
\begin{figure}[ht!]
\centering
      \includegraphics[width=0.5\textwidth]{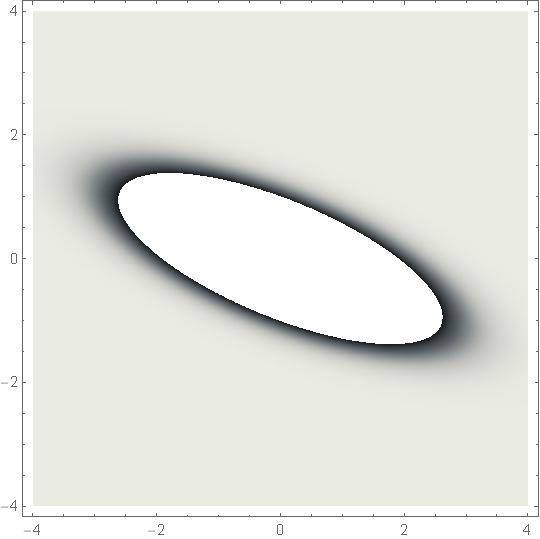}
  \caption{Density plot of the Wigner distribution for $n=1, r=1/2,\phi=\pi/4.$ \label{fig:gauss1}}
\end{figure}
\section{The shape of squeezed states\label{sec:sss}}
Quantization is conveniently defined in terms of Heisenberg-Weyl operators implementing noncommuting ``phase space translations".
For $z_0=(q_0,p_0)$ we write $z_0=q_0+ip_0.$ In the Schr\"{o}dinger representation the Heisenberg-Weyl operators $\hat{T}(z_0)$ are  given by (see e.g. \cite[Ch. 6.1.2]{gosson1})
\be (\hat{T}(z_0)\psi)(x)=e^{i(p_0\cdot x-\frac{1}{2}p_0\cdot x_0)}\psi(x-x_0).\label{eq:tz0}\ee
We have
\be \hat{T}(z_0)\hat{T}(z_1)=e^{i\sigma(z_0,z_1)}\hat{T}(z_1)\hat{T}(z_0),\ee
\be \hat{T}(z_0+z_1)=e^{-\frac{i}{2}\sigma(z_0,z_1)}\hat{T}(z_0)\hat{T}(z_1),\ee
where \be \sigma(z_0,z_1)=i\Im(\bf{z}_0\bar{\bf{z}}_1).\ee
From Eqs (\ref{eq:bpsi}),(\ref{eq:tz0}) we derive the action of phase space translations
on Bargmann's representation of wave functions\footnote{Hall \cite{hall} defines phase space translations as
$T_\mathbf{a}F(\mathbf{z})=\exp(-|a|^2/4+\bar{\mathbf{a}}\cdot z)\,F(z-\mathbf{a}).$ Thus we have $\hat{T}(z_0)=T_{\overline{\mathbf{z}_0}/\sqrt{2}}.$}:
\be \hat{T}(z_0)F(z)=e^{-\frac{1}{4}|z_0|^2}e^{\frac{1}{\sqrt{2}} z_0\cdot z}F\left(z-\frac{\overline{z_0}}{\sqrt{2}}\right).\ee
With $e_w$ defined in Eq. (\ref{eq:ew}) we have
\be \hat{T}(z)e_w=e^{-\frac{|z|^2}{4}-\frac{w\cdot \bar{z}}{\sqrt{2}}}\,e_{w+\frac{z}{\sqrt{2}}}.\ee
Therefore, up to numerical factors, the Heisenberg-Weyl operators $\hat{T}(z)$ indeed act as translations within the family
of coherent states $e_w.$
\subsection{Grossmann-Royer operators}
One way of looking at the quantization procedure is by associating operators to functions on the phase space. The standard Weyl quantization can be most conveniently described with the help of Grossmann-Royer operators $\tilde{T}(z)$ defined as
(see \cite[pp. 156-157]{gosson1})
\be \tilde{T}(z_0)=\hat{T}(z_0)\tilde{T}(0)\hat{T}(-z_0),\ee
where $\tilde{T}(0)$ is the parity operator
\be (\tilde{T}(0)\psi)(x)=\psi(-x).\ee
Explicitly, in the Schr\"{o}dinger representation
\be \tilde{T}(z_0)\psi(x)=e^{2i\, p_0\cdot (x-x_0) }\psi(2x_0-x).\ee
In the Bargmann representation we obtain
\be \tilde{T}(z_0)F(z)=e^{-|z_0|^2+\sqrt{2}z_0\cdot z}F(\sqrt{2}\,\overline{z_0}-z).\ee
The Weyl quantization reduces then to integration: if $a(z)$ is a complex function on the phase space, then the associated quantum mechanical operator $\hat{A}$ is given by (see \cite[Coprollary 6.13]{gosson1})
\be \hat{A}=\pi^{-n}\int a(z)\tilde{T}(z)d^{2n}z.\ee
\subsection{Wigner distribution}
See e.g. \cite[p. 187]{gosson1},\cite[p. 456, Proposition 8.6-5]{rieckers}
\be
W(\psi,\phi)(z)=\pi^{-n}(\tilde{T}(z)\psi,\phi)_{L^2(\BR^n_x)}. One defines:
\ee
\be W\psi=W(\psi,\psi).\ee
\begin{rem}
It should be noted that the operators $\tilde{T}(z)$ are all unitary equivalent to the inversion operator, in particular they are self-adjoint with $\tilde{T}(z)^2=I.$ Therefore each of them is a difference of two complementary orthogonal projection operators:
\be \tilde{T}(z)=E_+(z)-E_-(z),\ee
where \be E_+(z)=(I+\tilde{T}(z))/2,\quad   E_-(z)=(I-\tilde{T}(z))/2.\ee
The part $-E_-(z)$ is responsible for the possible negative values in the Wigner quasi-probability distribution.
\end{rem}
We have
\be (W(\psi,\phi),W(\psi',\phi'))_{L^2(\BR^{2n}_x)}=(2\pi)^{-n}(\psi,\psi')_{L^2\BR^n_x)}(\phi,\phi')_{L^2\BR^n_x)}.\ee
\be
\int W\psi(z)d^n z=||\psi||^2_{L^2(\BR^n_x)}.\ee
In the Schr\"{o}dinger representation
\be W(\psi,\phi)(z)=\left(\frac{1}{2\pi}\right)^{n}\int e^{-i p\cdot y}\psi\left(x+\frac{y}{2}\right)\overline{\phi\left(x-\frac{y}{2}\right)}\,d^ny.\ee
In the Bargmann representation
\begin{align} W(\psi,\phi)(z)&=\pi^{-n}e^{-|z|^2}\int e^{\sqrt{2} z\cdot w}B\psi(\sqrt{2}\bar{z}-w)B\overline{\phi(w)}\,d\mu(w)\\
&=\pi^{-n}\mathbf{}e^{-|z|^2}\int e^{\sqrt{2}\bar{z}\cdot\bar{w}}B\psi(w)\overline{B\phi(\sqrt{2}\bar{z}-w)}\,d\mu(w).
\end{align}
The Wigner distribution of squeezed states can be easily calculated using the integral formula (\ref{eq:itz}):
\be W_{e_A}(z)= \pi^{-n}e^{z\cdot\bar{A}(I-A\bar{A})^{-1}z+\bar{z}\cdot(I-A\bar{A})^{-1}A\bar{z}-\bar{z}\cdot(I+A\bar{A})(I-A\bar{A})^{-1}z},\ee
where $z=q+ip.$
We can rewrite the last formula as
 \be W_{e_A}(z)=\pi^{-n}e^{-\overline{z'}.z'},\ee
 where
 \be z'=(I-A\bar{A})^{-1/2}z-(I-A\bar{A})^{-1/2}A\bar{z}.\label{eq:zp}\ee

Using the symmetric singular value decomposition (\ref{eq:ssvd}) we can write Eq. (\ref{eq:zp}) as
\be z'=U\frac{I}{\sqrt{I-\Sigma^2}}U^{-1}\,z-U\frac{\Sigma}{\sqrt{I-\Sigma^2}}U^T\,\bar{z}.\ee
We can rewrite the above equation in a matrix form as follows:
\be \begin{bmatrix}z'\\ \overline{z'}\end{bmatrix}=\begin{bmatrix}U&0\\0&\bar{U}\end{bmatrix}\begin{bmatrix}\frac{I}{\sqrt{I-\Sigma^2}}&-\frac{\Sigma}{\sqrt{I-\Sigma^2}}\\-\frac{\Sigma}{\sqrt{I-\Sigma^2}}&\frac{I}{\sqrt{I-\Sigma^2}}\end{bmatrix}\begin{bmatrix}U&0\\0&\bar{U}\end{bmatrix}^{-1}\begin{bmatrix}z\\\overline{z}\end{bmatrix}.\label{eq:zzb}\ee
Using the Cayley transform $X\mapsto cXc^{-1},$ where $c=2^{-1/2}\left[\begin{smallmatrix}1&1\\-i&i\end{smallmatrix}\right],\, c^{-1}=2^{-1/2}\left[\begin{smallmatrix}1&i\\1&-i\end{smallmatrix}\right]$ we can rewrite Eq. (\ref{eq:zzb}) in terms of real variables $z=(q,p)$ as follows:
\be z'=Hz,\ee where

\begin{eqnarray} H&=&QDQ^{-1},\\
Q&=&\begin{bmatrix}\Re U&-\Im U\\ \Im U&\Re U\end{bmatrix},\label{eq:q}\\
D&=&\begin{bmatrix} \sqrt{\frac{I-\Sigma}{I+\Sigma}}&0\\0&\sqrt{\frac{I+\Sigma}{I-\Sigma}}\end{bmatrix}.\label{eq:d}\
\end{eqnarray}
The matrix $Q$ is orthogonal symplectic, the matrix $D$ is nonnegative diagonal and symplectic.
Introducing orthogonally rotated variables $\tilde{z}=Q^T\,z$ we can, symbolically, write
\be W_{e_A}(z)=\pi^{-n}\exp\left \{-\left(\frac{\tilde{q}^2}{a^2}+\frac{\tilde{q}^2}{b^2}\right)\right\},\label{eq:wea}\ee
where $a$ and $b$ are diagonal squeezing matrices:
\be a=\sqrt{\frac{I+\Sigma}{I-\Sigma}},\,b=a^{-1}=\sqrt{\frac{I-\Sigma}{I+\Sigma}}.\ee
\section{Fractal patterns with quantum blobs\label{sec:fp}}
\subsection{Hyperbolic symplectic transformations}
For one degree of freedom the transformations of squeezed state parameter $a,$ $|a|<1$ under symplectic transformations
takes the form
\be a'=\frac{\bar{\lambda}a+\bar{\mu}}{\mu a+\lambda}.\label{eq:lfrac}\ee
Notice that
\be 1-|a'|^2=\frac{1-|a|^2}{|\mu a+\lambda|^2}.\ee
Consider the following repulsive analogue of the harmonic oscillator Hamiltonian (inverted oscillator):
\be H=p^2-q^2\ee
represented by the matrix $h=\left[\begin{smallmatrix}-1&0\\0&1\end{smallmatrix}\right].$ It generates one-parameter group of symplectic transformations $S(\tau)$
\be S(\tau)=\exp{\tau Jh}=\frac{e^{-\tau}}{2}\begin{bmatrix}e^{2\tau}+1&e^{2\tau}-1\\e^{2\tau}-1&e^{2\tau}+1\end{bmatrix}.\ee
It is convenient to introduce the parameter $\beta=\mathrm{arctanh }\, \tau,\,0\leq \beta<1.$ Then
\be S(\beta)=\frac{1}{\sqrt{1-\beta^2}}\begin{bmatrix}1&\beta\\ \beta&1\end{bmatrix}.\label{eq:sb}\ee
Using the Cayley transform (\ref{eq:spc}) we transform the symplectic matrices $S(\beta)$ into complex matrices
\be A(\beta)=\cC S(\beta)\cC^{-1}=\frac{1}{\sqrt{1-\beta^2}}\begin{bmatrix}1&i\beta\\-i\beta&1\end{bmatrix}.\ee
Since $\det A(\beta)=1$ and, for $\beta>0,$  $\tr A(\beta)>2,$ the transformations $A(\beta),\,\beta>0$ are {\em hyperbolic } \cite[p. 88, Exercise 2]{ahlfors}. We have that $A(\beta)\in \mathrm{SU}(1,1),$ therefore,being conformal, $A(\beta)$ preserve the non-Euclidean distance on the unit disk, but they do not preserve its Euclidean geometry, as can be seen in Fig. (\ref{fig:def}).
 \begin{figure}[ht!]
\centering
      \includegraphics[width=0.5\textwidth]{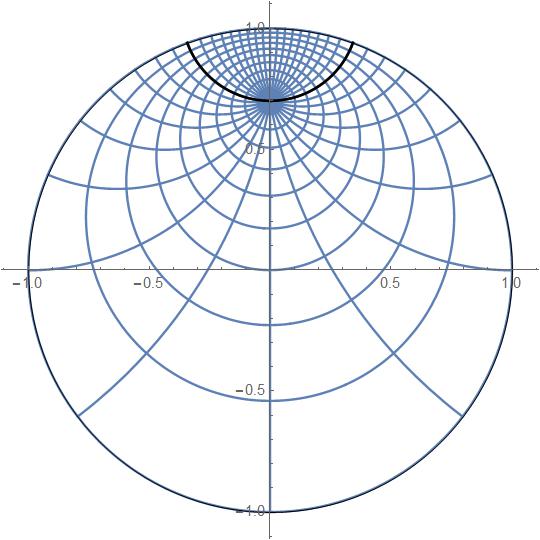}
  \caption{Deformation of the Euclidean polar grid by the linear fractional transformation defined by the $\mathrm{SU}(1,1)$ matrix $A(0.7).$}\label{fig:def}
\end{figure}
 There are two fixed points $i,-i$ on the circular boundary of the disk. They correspond to the infinitely elongated quantum blobs. While they do not correspond to normalized space vectors, apart from their normalization, they are represented by well defined holomorphic functions in the Bargmann representation.

 In addition to $S(\beta)$ we introduce three other families of symplectic transformations successively rotating by $\pi/2.$
 Thus we get

\begin{eqnarray} S_1(\beta)&=&\frac{1}{\sqrt{1-\beta^2}}\left[\begin{smallmatrix}1&\beta\\ \beta&1\end{smallmatrix}\right],\,
S_2(\beta)=\frac{1}{\sqrt{1-\beta^2}}\left[\begin{smallmatrix}1-\beta&0\\ 0&1+\beta\end{smallmatrix}\right],\notag\\
S_3(\beta)&=&\frac{1}{\sqrt{1-\beta^2}}\left[\begin{smallmatrix}1&-\beta\\ -\beta&1\end{smallmatrix}\right],\,
S_4(\beta)=\frac{1}{\sqrt{1-\beta^2}}\left[\begin{smallmatrix}1+\beta&0\\ 0&1-\beta\end{smallmatrix}\right].
\end{eqnarray}
We have $S_3(\beta)=S_1(\beta)^{-1}$ and $S_4(\beta)=S_2(\beta)^{-1}.$ But the matrices $S_1(\beta)$ and $S_2(\beta)$ do not commute with each, which will give rise, as we shall see, to a symmetric fractal pattern generated by a random walk over the family of four transformations. The corresponding $\mathrm{SU}(1,1)$ matrices are
\begin{eqnarray} A_1(\beta)&=&\frac{1}{\sqrt{1-\beta^2}}\left[\begin{smallmatrix}1&-i\beta\\ i\beta&1\end{smallmatrix}\right],\,
A_2(\beta)=\frac{1}{\sqrt{1-\beta^2}}\left[\begin{smallmatrix}1&-\beta\\ -\beta&1\end{smallmatrix}\right],\notag\\
A_3(\beta)&=&\frac{1}{\sqrt{1-\beta^2}}\left[\begin{smallmatrix}1&i\beta\\ -i\beta&1\end{smallmatrix}\right],\,
A_4(\beta)=\frac{1}{\sqrt{1-\beta^2}}\left[\begin{smallmatrix}1&\beta\\ \beta&1\end{smallmatrix}\right].
\end{eqnarray}
Given $\beta$ the four matrices $A_i$ define an iterated function system (through the ``chaos game'') of M\"{o}bius transformations on the disk, where we use the linear fractional transformations as in Eq. (\ref{eq:lfrac}).
\begin{figure}[ht!]
\centering
      \includegraphics[width=0.5\textwidth]{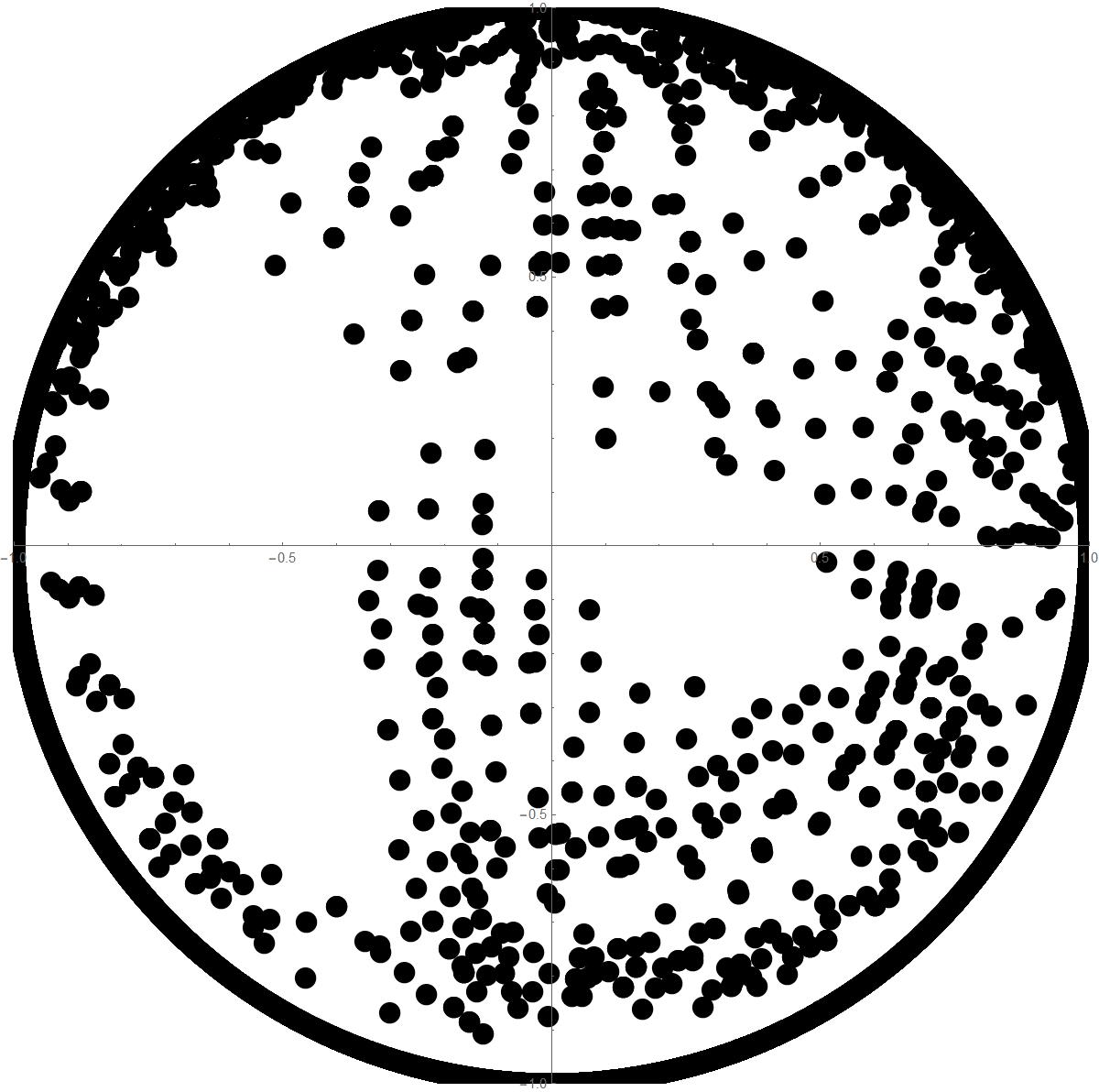}
  \caption{$10^5$ points in the disk from the chaos game for $\beta=0.1.$}\label{fig:rep01}
\end{figure}
\begin{figure}[ht!]
\centering
      \includegraphics[width=0.5\textwidth]{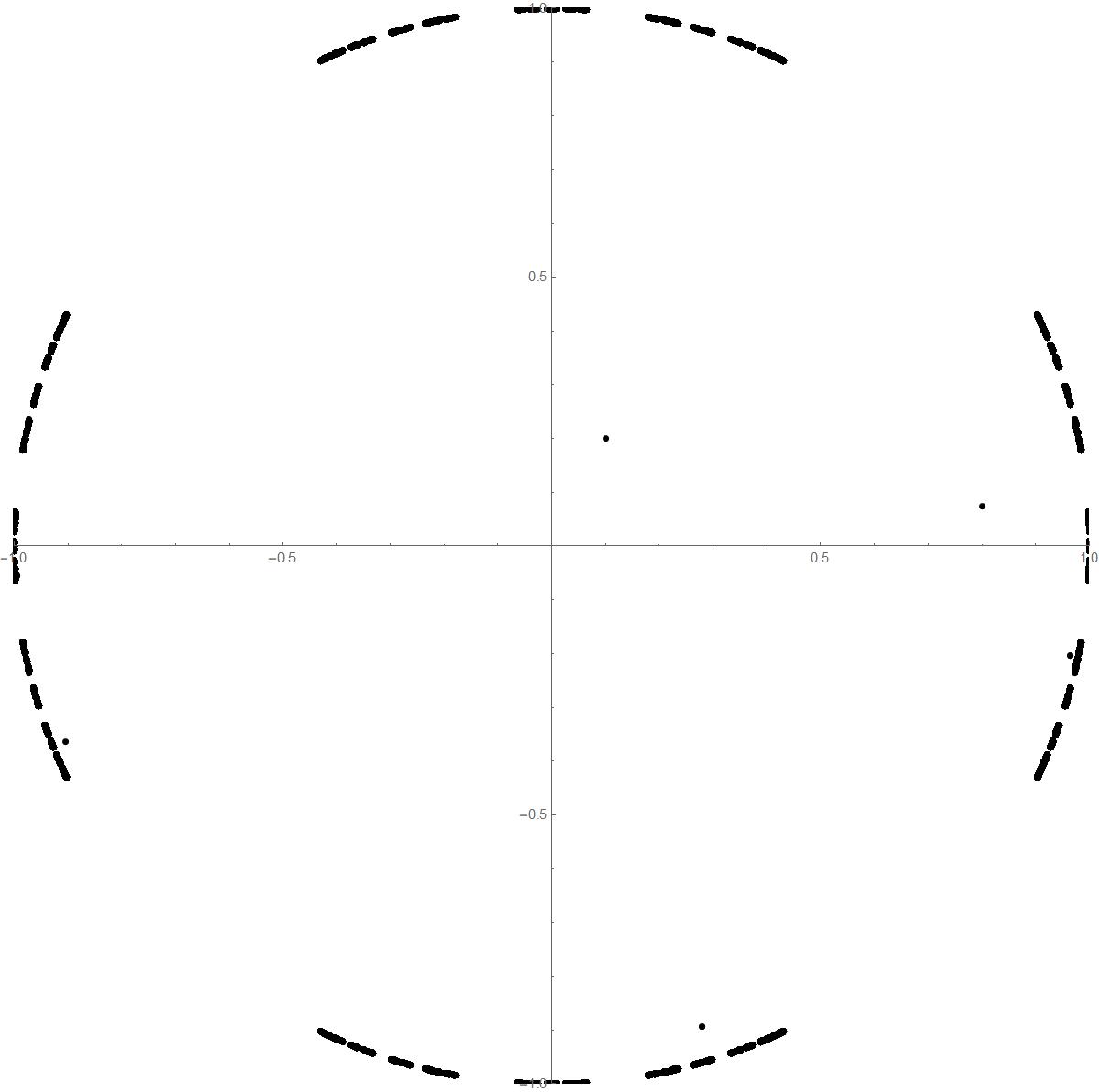}
   \caption{$10^6$ points in the disk from the chaos game for $\beta=0.75.$}\label{fig:rep075}
\end{figure}
 Figs. \ref{fig:rep01} and \ref{fig:rep075} show the resulting pattern for $10^6$ random iterations, starting with a randomly chosen initial point. It is seen that the points in the disk are quickly driven towards the boundary. For $\beta=0.75$ the angular arguments of the complex parameter show a fractal pattern similar to the one known as the Cantor set, except that here it is located on the unit circle - see Fig. \ref{fig:circ075}.
 \begin{figure}[ht!]
\centering
      \includegraphics[width=0.5\textwidth]{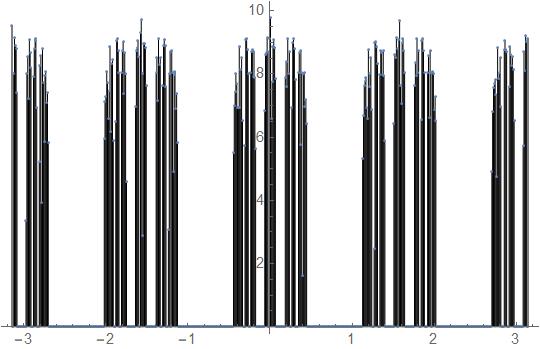}
   \caption{Arguments of $10^6$ points in the disk from the chaos game for $\beta=0.75.$ Vertical logarithmic scale.}\label{fig:circ075}
\end{figure}
\subsection{The parabolic case}
This time we start with the free evolution Hamiltonian
\be H=p^2\ee
represented by the matrix $h=\left[\begin{smallmatrix}0&0\\0&1\end{smallmatrix}\right].$ It generates one-parameter group of symplectic transformations $S(\tau)$
\be S(\tau)=\exp{\tau Jh}=\begin{bmatrix}1&\tau\\0&1\end{bmatrix}.\ee
The same way as before we obtain $\mathrm{SU}(1,1)$ matrices
\begin{eqnarray} A_1(\beta)&=&\frac{1}{2}\left[\begin{smallmatrix}2-i\tau&i\tau\\-i\tau&2+i\tau\end{smallmatrix}\right],\,
A_2(\beta)=\frac{1}{2}\left[\begin{smallmatrix}2-i\tau&-\tau\\ -\tau&2+i\tau\end{smallmatrix}\right],\notag\\
A_3(\beta)&=&\frac{1}{2}\left[\begin{smallmatrix}2-i\tau&-i\tau\\ i\tau&2+i\tau\end{smallmatrix}\right],\,
A_4(\beta)=\frac{1}{2}\left[\begin{smallmatrix}2-i\tau&\tau\\ \tau&2+i\tau\end{smallmatrix}\right].
\end{eqnarray}
All these matrices have trace equal $2,$ they define parabolic M\"{o}bius transformations. Each of them has just one fixed point on the boundary of the disk.
 \begin{figure}[ht!]
\centering
      \includegraphics[width=0.5\textwidth]{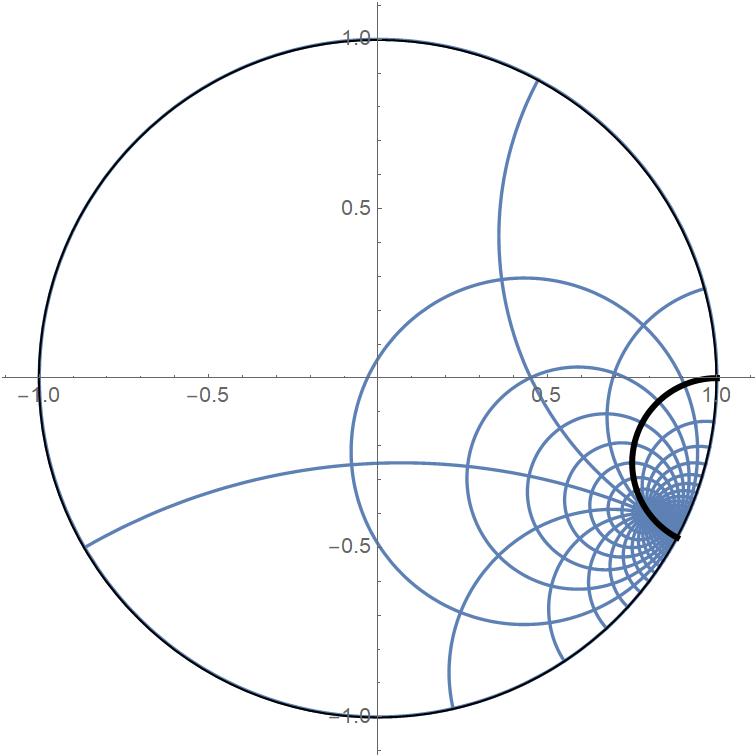}
  \caption{Deformation of the Euclidean polar grid by the linear fractional transformation defined by the $\mathrm{SU}(1,1)$ parabolic matrix $A_1(5).$ The thick semicircle is the image of the segment $-1<x<1,y=0.$ }\label{fig:defb}
\end{figure}
Fig. \ref{fig:defb} shows  the deformation of the Euclidean polar grid for $A_1(5).$ The fixed point on the boundary, for this matrix is $z=1+0i.$

In order to obtain the symmetry of the resulting pattern we will use in the chaos game also the inverse matrices $A_i(\tau)=A_{i-4}(\tau)^{-1},\, (i=5,...,8).$
\begin{figure}[ht!]
\centering
      \includegraphics[width=0.5\textwidth]{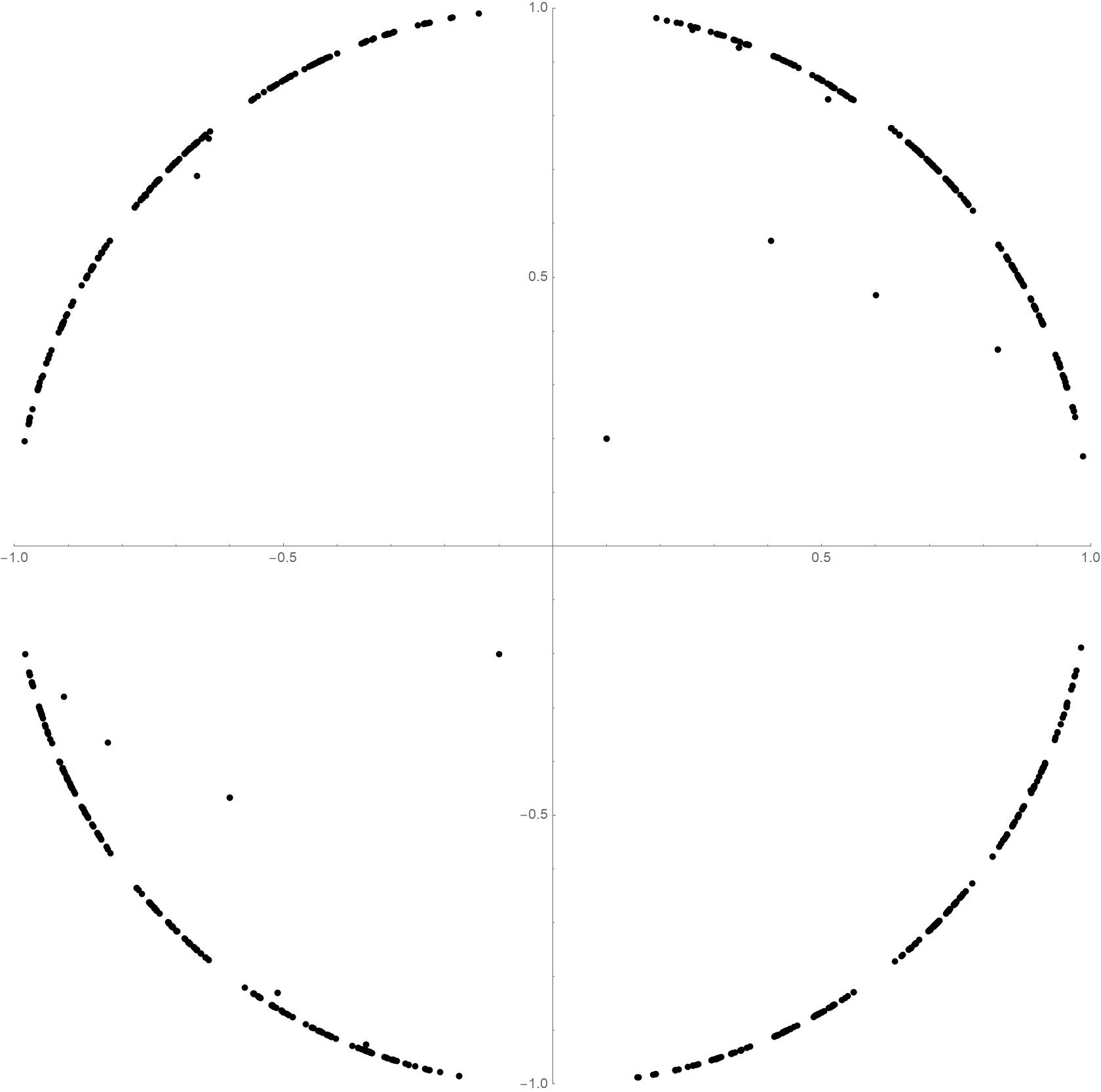}
  \caption{$10^3$ points in the disk from the chaos game with eight parabolic matrices for $\tau=2.$}\label{fig:para2}
\end{figure}
\begin{figure}[!htb!]
\centering
      \includegraphics[width=0.5\textwidth]{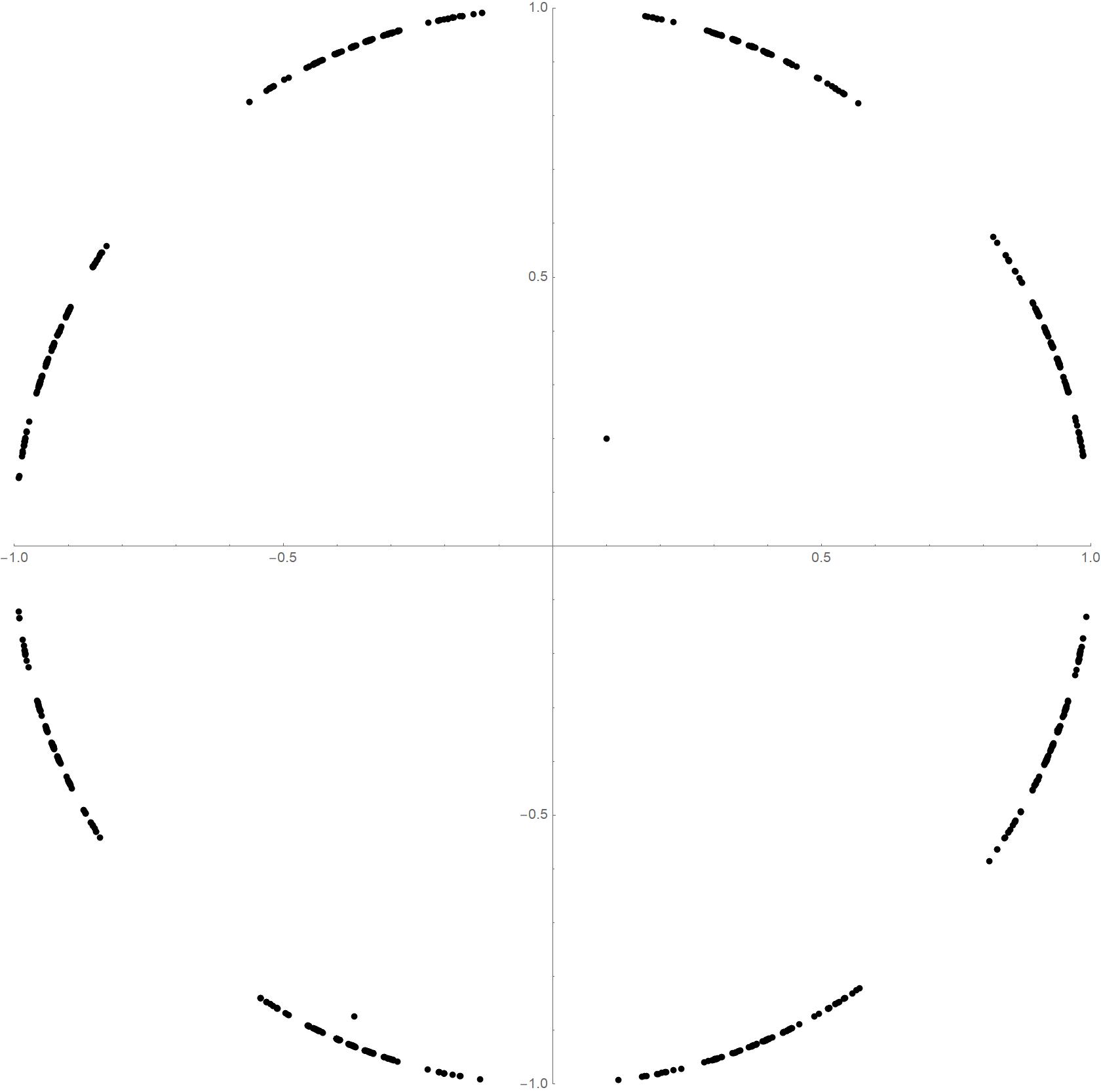}
   \caption{$10^3$ points in the disk from the chaos game with eight parabolic matrices for $\tau=5.$}\label{fig:para5}
\end{figure}
Figs. \ref{fig:para2} and \ref{fig:para5} show the resulting patterns for the chaos game (with equal probabilities $1/8$) for $\tau=2$ and $\tau=5.$
 \begin{figure}[!htb]
\centering
      \includegraphics[width=0.5\textwidth]{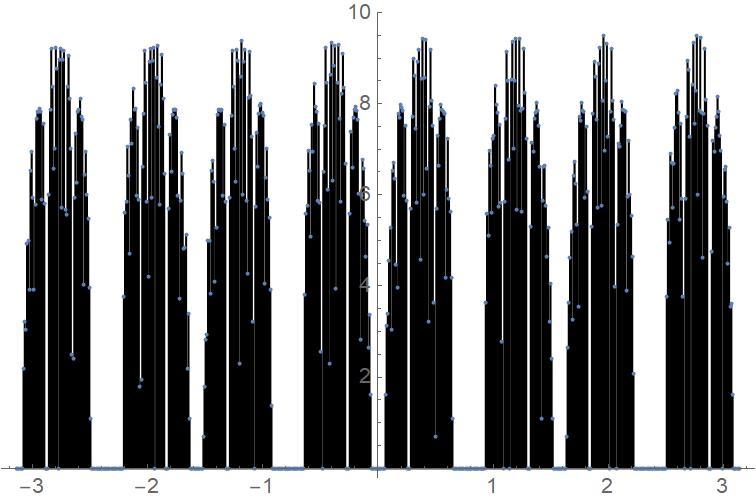}
   \caption{Angular distribution $10^6$ points in the disk from the parabolic chaos game for $\tau=0.75.$ Vertical logarithmic scale.}\label{fig:circ5}
\end{figure}
Fig. \ref{fig:circ5} shows the angular distribution of points for $\tau=5.$
\appendix
\newpage
\section{Appendices}
\subsection{The Itzykson integral}\label{sec:a1}
Let $d\mu(w)=\exp(-|w|^2)d\lambda(w),$ where $d\lambda(w)$ is the Lebesgue measure on $\BC^n.$ With $\gamma,\delta$ symmetric complex $n\times n$ matrices and $a,b\in\BC^n$ let
\be g(w)=\exp\left(\frac{1}{2}w\cdot \gamma w+\frac{1}{2}\bar{w}\cdot\bar{\delta}\bar{w}+a\cdot w +\bar{b}\cdot\bar{w}\right),\label{eq:itz}\ee
and $I=\int g(w)d\mu(w).$ Then, assuming integrability,
\begin{align*}
I&=\det(1-\gamma\bar{\delta})^{-1/2}\times\\
\times&\exp\left(\frac{1}{2}a\cdot\bar{\delta}(1-\gamma\bar{\delta})^{-1}a+\right.\\
+&\bar{b}\cdot(1-\gamma\bar{\delta})^{-1}a+\\
+&\left.\frac{1}{2}\bar{b}\cdot(1-\gamma\bar{\delta})^{-1}\gamma\bar{b}\right)
\end{align*}
\subsection{Symmetric Singular Value Decomposition}\label{sec:a2}
 The following symmetric singular value decomposition theorem \cite[p. 136]{horn1991} is often referred to as ``Takagi's factorization'' or ``Autonne decomposition'' \cite[Corollary 4.4.4]{horn1990}.
 \begin{theorem}
 Let $A$ be a complex symmetric matrix. There exists a unitary matrix $U$ and a real nonnegative diagonal matrix $\Sigma$ such that \be A=U\Sigma U^T.\label{eq:ssvd}\ee
 The columns of $U$ are an orthonormal set of eigenvectors for $A\bar{A}$, and the corresponding diagonal entries of $\Sigma$ are nonnegative square roots of the corresponding eigenvalues of $\bar{A}A.$
 \end{theorem}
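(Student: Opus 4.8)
The plan is to prove the theorem by induction on the size $n$ of $A$, peeling off one ``Takagi vector'' at a time. (One could instead deduce it from the ordinary singular value decomposition, but the inductive deflation argument is more self-contained.) For $n=1$ there is nothing to do: if $A=(a)$, write $a=e^{i\theta}|a|$ and take $U=(e^{i\theta/2})$, $\Sigma=(|a|)$, so that $U\Sigma U^{T}=a$.

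For the inductive step, first I would set $\sigma_1=\max_{\|x\|=1}|x^{T}Ax|$, which is attained at some unit vector $u_1$ by compactness of the sphere; replacing $u_1$ by $e^{i\vartheta}u_1$, an operation that multiplies $u_1^{T}Au_1$ by $e^{2i\vartheta}$, I may arrange $u_1^{T}Au_1=\sigma_1\ge 0$. The crucial step is then to establish
\be Au_1=\sigma_1\bar u_1.\ee
If $\sigma_1=0$ this is trivial, since $x^{T}Ax\equiv 0$ together with $A=A^{T}$ forces $A=0$ by polarization. If $\sigma_1>0$ one differentiates $t\mapsto\bigl|\,(u_1+tw)^{T}A(u_1+tw)/\|u_1+tw\|^{2}\,\bigr|^{2}$ at $t=0$ for an arbitrary $w\in\BC^{n}$; maximality yields $\Re(w^{T}Au_1)=\sigma_1\Re(u_1^{*}w)$, and running the same computation with $w$ replaced by $iw$ gives the companion identity for imaginary parts. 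Together they say $w^{T}Au_1=\sigma_1 u_1^{*}w=\sigma_1 w^{T}\bar u_1$ for every $w$, which is the displayed relation. I expect this variational step — pinning down the phase normalisation of $u_1$ and carrying the complex first-order condition through correctly — to be the main obstacle; everything after it is bookkeeping.

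Next I would complete $u_1$ to a unitary matrix $Q$ with first column $u_1$ and set $A_1=Q^{T}AQ$ (the \emph{transpose}, not the adjoint, conjugation). Then $A_1$ is again complex symmetric, and its first column equals $Q^{T}Au_1=\sigma_1 Q^{T}\bar u_1=\sigma_1\,\overline{Q^{*}u_1}=\sigma_1 e_1$ since $Q^{*}u_1=e_1$; by symmetry its first row is $\sigma_1 e_1^{T}$ as well, so $A_1=\left[\begin{smallmatrix}\sigma_1&0\\0&A_2\end{smallmatrix}\right]$ with $A_2$ an $(n-1)\times(n-1)$ complex symmetric matrix. Applying the inductive hypothesis, $A_2=U_2\Sigma_2 U_2^{T}$, and setting $\tilde U=\left[\begin{smallmatrix}1&0\\0&U_2\end{smallmatrix}\right]$, $\tilde\Sigma=\left[\begin{smallmatrix}\sigma_1&0\\0&\Sigma_2\end{smallmatrix}\right]$, one unwinds, using $(Q^{T})^{-1}=\bar Q$ and $\bar Q^{T}=Q^{*}$,
\be A=\bar Q A_1 Q^{*}=\bar Q\,\tilde U\,\tilde\Sigma\,\tilde U^{T}Q^{*}=(\bar Q\tilde U)\,\tilde\Sigma\,(\bar Q\tilde U)^{T};\ee
since $U:=\bar Q\tilde U$ is unitary and $\tilde\Sigma$ is real nonnegative diagonal, this is the desired factorisation.

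Finally I would read the eigenvector/eigenvalue statement straight off $A=U\Sigma U^{T}$: from $U^{T}\bar U=I$ and $\bar A=\bar U\Sigma U^{*}$ one gets $A\bar A=U\Sigma(U^{T}\bar U)\Sigma U^{*}=U\Sigma^{2}U^{*}$, so the columns of $U$ form an orthonormal eigenbasis of $A\bar A$ with eigenvalues the squares of the diagonal entries of $\Sigma$; likewise $\bar AA=\bar U\Sigma^{2}U^{T}=\bar U\Sigma^{2}(\bar U)^{-1}$ is similar to $\Sigma^{2}$, so it carries the same eigenvalues. This is exactly the asserted normalisation: the $k$-th column of $U$ is paired with $\sigma_k=\sqrt{\lambda_k}\ge 0$, where $\lambda_k$ is the corresponding eigenvalue of $A\bar A$ (equivalently of $\bar AA$).
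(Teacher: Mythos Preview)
Your argument is correct and self-contained; the variational step, the deflation via $A_1=Q^{T}AQ$, and the final read-off of the eigenvector/eigenvalue data all go through as written. One tiny cosmetic point: in the theorem statement the unitary is called $Q$ in one place and $U$ elsewhere, so you may want to align your notation with whichever symbol you keep.

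However, there is nothing to compare against: the paper does \emph{not} prove this theorem. It is quoted as the ``symmetric singular value decomposition theorem'' (Takagi's factorization / Autonne decomposition) with a citation to Horn--Johnson, \emph{Topics in Matrix Analysis}, p.~136, and then simply used. So your proposal supplies a proof where the paper gives only a reference. Your inductive deflation argument is in fact essentially the classical proof one finds in Horn--Johnson, so in that sense you are reconstructing the cited source rather than diverging from the paper's (nonexistent) argument.
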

\section*{Acknowledgements}
I am grateful to Stefan Berceanu, Joachim Hilgert and Karl-Hermann Neeb for their helpful comments and patient explanations of mathematical technicalities during various phases of this work, and to Maurice de Gosson for his words of encouragement.

%\bibliographystyle{hplain}
%\bibliography{E:/MyTex/Gaussons/mydb}
\end{document}